\def\BibTeX{{\rm B\kern-.05em{\sc i\kern-.025em b}\kern-.08em
    T\kern-.1667em\lower.7ex\hbox{E}\kern-.125emX}}
\DeclareMathOperator{\sign}{sign}
\DeclareMathOperator*{\argmin}{arg\,min}
\DeclareMathOperator{\GF}{GF}
\DeclareMathOperator{\tr}{tr}
\DeclareMathOperator{\Tr}{Tr}
\DeclareMathOperator{\row}{row}
\DeclareMathOperator{\wt}{wt}
\newcommand{\mbf}[1]{\pmb{#1}}
\newcommand{\teq}{\triangleq}
\newcommand{\sC}{{\cal C}}
\newcommand{\sG}{{\cal G}}
\newcommand{\sM}{{\cal M}}
\newcommand{\sN}{{\cal N}}
\newcommand{\sS}{{\cal S}}
\newcommand{\cS}{{\cal S}}
\newcommand{\CC}{{\mathbb C}}
\newcommand{\RR}{{\mathbb R}}
\newcommand{\ZZ}{{\mathbb Z}}
\let\emph\textit
\theoremstyle{definition}		
\newtheorem{definition}{Definition}
\newtheorem{theorem}{Theorem}
\newtheorem{corollary}{Corollary}
\newtheorem{example}{Example}
\newtheorem{remark}{Remark}
\def\approxprop{%
	\def\p{%
		\setbox0=\vbox{\hbox{$\propto$}}%
		\ht0=0.7ex \box0 }%
	\def\s{%
		\vbox{\hbox{$\sim$}}%
	}%
	\mathrel{\raisebox{0.4ex}{%
			\mbox{$\underset{\s}{\p}$}%
	}}%
}
\begin{document}

\history{Prepared Manuscript}
\doi{: TBD}

\title{Log-domain decoding of quantum LDPC codes over binary finite fields}

\author{	\uppercase{Ching-Yi~Lai} 
	~and~	\uppercase{Kao-Yueh~Kuo} 
}
\address{Institute of Communications Engineering, National Yang Ming Chiao Tung University, Hsinchu 30010, Taiwan.}

\tfootnote{CYL was  supported  by the Ministry of Science and Technology (MOST) in Taiwan, under Grant MOST110-2628-E-A49-007.}

\markboth
{C.-Y.~Lai and K.-Y.~Kuo: Log-domain decoding of quantum LDPC codes over binary finite fields}
{C.-Y.~Lai and K.-Y.~Kuo: Log-domain decoding of quantum LDPC codes over binary finite fields}

\corresp{Corresponding author: Ching-Yi~Lai (email: cylai@nycu.edu.tw).}

\begin{abstract}
A quantum stabilizer code over GF$(q)$ corresponds to a classical additive code over GF$(q^2)$ that is self-orthogonal with respect to a symplectic inner product. We study the decoding of quantum low-density parity-check (LDPC) codes over binary finite fields GF$(q=2^l)$ by the sum-product algorithm, also known as belief propagation (BP). Conventionally, a message in a nonbinary BP for quantum codes over GF$(2^l)$ represents a probability vector over GF$(2^{2l})$, inducing high decoding complexity. In this paper, we explore the property of the symplectic inner product and show that scalar messages suffice for BP decoding of nonbinary quantum codes, rather than vector messages necessary for the conventional BP. Consequently, we propose a BP decoding algorithm for quantum codes over GF$(2^l)$ by passing scalar messages so that it has low computation complexity. The algorithm is specified in log domain by using log-likelihood ratios (LLRs) of the channel statistics to have a low implementation cost. Moreover, techniques such as message normalization or offset can be naturally applied in this algorithm to mitigate the effects of short cycles to improve BP performance. This is important for nonbinary quantum codes since they may have more short cycles compared to binary quantum codes. Several computer simulations are provided to demonstrate these advantages. The scalar-based strategy can also be used to improve the BP decoding of classical linear codes over GF$(2^l)$ with many short cycles.
\end{abstract}

\begin{keywords}
belief propagation (BP), log-likelihood ratio (LLR), low-density parity-check (LDPC) codes, message normalization and offset, quantum stabilizer codes, short cycles, sparse-graph codes.
\end{keywords}

\titlepgskip=-15pt

\maketitle

\section{Introduction} \label{sec:Intro}
Quantum stabilizer codes can be used to protect quantum information with efficient encoding and decoding procedures similar to classical error-correcting codes \cite{Shor95,CS96,Steane96,GotPhD,CRSS98,NC00}. 
Since quantum coherence decays quickly, an efficient decoding procedure is particularly important.
For this purpose, sparse-graph quantum codes, similar to classical low-density parity-check (LDPC) codes \cite{Gal63,MN96,Mac99}, are preferred since they can be efficiently decoded by the sum-product algorithm  \mbox{\cite{MMM04, PC08,Wan+12,Bab+15,ROJ19,PK19,KL20, RWBC20 ,LP19}},
which is usually understood as a realization of belief propagation (BP) \cite{Pea88}. 
BP is appealing because  of   its good decoding performance for sparse-graph codes and efficient computation complexity that  is nearly linear in the code length~\cite{Gal63,MN96,Mac99}.
BP decoding is done by iteratively passing messages on a Tanner graph~\cite{Tan81} corresponding to the parity-check matrix of a code~\cite{WLK95,MMC98,AM00,KFL01,RU08}. 

The concept of stabilizer codes has been extended from the binary case (qubits) to nonbinary case ($q$-ary qudits) \cite{Kni96a,Kni96b,Rai99,MU00,AK01,KKKS06}.
We assume that a $q$-ary qudit  suffers errors from an error basis of $q^2$ elements.
A quantum stabilizer code over $\GF(q)$ can be considered as a classical additive code over $\GF(q^2)$.
Although binary quantum codes are most widely studied, a nonbinary quantum code has its advantages over binary ones. 
For example, binary stabilizer codes do not perform as well as nonbinary codes in spatially correlated noise \cite{KF05,NB06}.
To improve, we can group every $l$ qubits that are strongly-correlated together as a qudit and use quantum codes over $\GF(2^l)$
at a cost of higher decoding complexity. 

In BP decoding of a classical code over $\GF(q)$, a message is a probability vector of length $q$ for a $q$-ary variable~\cite{DM98}.
The complexity of BP is dominant by the \emph{check-node complexity} per edge for the convolution of messages,
which is $O(q\log q)$ per edge for classical codes over $\GF(q)$ \cite{MD01,DF07}.

For quantum codes over $\GF(q)$, the check-node complexity is even higher proportional to $q^2\log q^2$. 
	To decode a \mbox{length-$N$} quantum code over $\GF(q)$ with $M$ parity checks, one can represent its check matrix $H\in \GF(q^{2})^{M\times N}$ by a matrix in $\GF(q)^{M\times 2N}$ and decode the code using $q$-ary BP (BP$_q$), so that the check-node complexity reduces to $O(q\log q)$ per edge.
	If $q=2^l$, one can further represent $H$ by a binary matrix $\in\GF(2)^{M\times 2lN}$ to use BP$_2$ and the check-node complexity reduces to $O(1)$ per edge.
	However, these methods ignore some error correlations (such as the correlations between Pauli $X$ and $Z$ errors),
	causing performance degradation \cite{DT14,Bab+15,ROJ19,PK19,KL20}.
 Therefore, we would like to study the BP decoding problem for nonbinary quantum codes,
 and we will show that it can be simplified to $O(1)$ for binary fields $\GF(2^l)$ without ignoring the error correlations.
This is done by exploring the property of the symplectic inner product of quantum codes, so that  scalar messages suffice for BP decoding of nonbinary quantum codes, rather than  vector messages necessary for the conventional BP.\footnote{
	In Appendix~\ref{sec:CBP}, we clarify the complexities of our approach and the conventional approach, both maintaining the correlations between $q^2$ errors.}

	In classical coding theory, BP is usually implemented in log domain to require only additions and lookup tables for computation \cite{Gal63,WSM04,RVH95,HOP96,HEAD01} (see Remark~\ref{rmk:tbl} for our case). 
	Moreover, the required bit-width for each scalar variable is fewer in log domain compared to linear domain (Remark~\ref{rmk:cmpx2}). 
These greatly simplify the implementation of BP. 
We would like to have a quantum version of efficient log-domain BP.

 Previously we proposed a refined BP$_4$ for binary quantum codes \cite{KL20}, which uses scalar messages 
with check-node complexity $O(1)$ per edge and completes a decoding equivalent to the conventional BP$_4$.
In this paper, we extend this approach to quantum codes over $\GF(2^l)$ 
so that an efficient  $2^{2l}$-ary BP decoding with check-node complexity $O(1)$ per edge is possible.
The central idea is that a message of this BP concerns whether a qudit error commutes or anticommutes with a parity-check Pauli operator.  Consequently,  we are able to define a scalar message in log domain for quantum codes (Definition~\ref{def:la}) and propose a log-likelihood ratio BP \mbox{(LLR-BP)} decoding algorithm (Algorithm~\ref{alg:LLR-BP}). 
This algorithm is applicable to any quantum codes over $\GF(2^l)$.

The proposed algorithm could be extended to quantum codes over $\GF(p^l)$ for a prime $p$. 
In this case, the check-node complexity would reduce from $O(p^{2l}\log p^{2l})$ to $O(p\log p)$ per edge (see the discussion in~Sec.~\ref{sec:con}).

Another issue of BP decoding for quantum codes is that the Tanner graph of a stabilizer code contains many short cycles.\footnote{This is caused by the overlaps of the rows in the check matrix, which cannot be prevented due to the commutation relations of the stabilizers.} 
This may lead to ineffective message passing \mbox{\cite{MMM04,PC08,Wan+12,Bab+15,ROJ19,PK19,LP19,KL20,RWBC20}}
since short cycles introduce unwanted dependency between messages to affect the convergence of BP \cite[Sec~4.2]{Gal63}. 
The problem can be mitigated during the code construction \cite{MMM04,KHIS11,KP13,TZ14,PK19}  but this would restrict the code candidates in applications. 
Another direction is to improve the BP decoding algorithm.  
BP can be improved by post-processing \cite{PC08,Wan+12,Bab+15,ROJ19,PK19,RWBC20}, 
but this increases the computation complexity  and hence increases the decoding time. 
Another approach is to use a neural BP (NBP) decoder \cite{LP19}; however, NBP may not apply to large codes due to the complicated offline training process and more importantly, there is less guarantee of low error-floor from training (see Figs.~2 and~3).

As our LLR-BP uses scalar messages, it is straightforward to apply the techniques of message normalization or offset \cite{CF02b,CDE+05,YHB04} without incurring much additional cost. These techniques can suppress overestimated messages caused by short cycles \cite{YHB04}. Applying these techniques significantly improves the BP performance on binary quantum codes \cite{KL20}. 
For nonbinary quantum codes, the number of rows in a check matrix becomes more (e.g., see \eqref{eq:css_w} and \eqref{eq:css_g}), which may cause much more short cycles compared to the case of binary quantum codes.
However, applying message normalization or offset on the scalar-based nonbinary BP again significantly improves the performance,
and this improvement does not need a large number of iterations to achieve.
Computer simulations will be conducted to show these advantages.

Since quantum codes are like classical codes with short cycles, our approach can also be used to improve the BP decoding of classical codes over $\GF(2^l)$ with short cycles. (This will be discussed in Remark~\ref{rmk:add_vs_lin}.)

This paper is organized as follows. In Sec.~\ref{sec:stb}, we provide some basics for binary fields $\GF(2^l)$ and define stabilizer codes over $\GF(2^l)$.
In Sec.~\ref{sec:dec}, we  give a scalar-based \mbox{LLR-BP} for   stabilizer codes over $\GF(2^l)$. 
In Sec.~\ref{sec:sim}, simulation results for several stabilizer codes are provided. 
We conclude and give some discussions in Sec.~\ref{sec:con}.

\section{Stabilizer codes over binary finite fields} \label{sec:stb}

The theory of binary and $q$-ary quantum stabilizer codes can be understood as certain classical codes over finite fields $\GF(2^2)$ and $\GF(q^2)$, respectively \cite{Shor95,CS96,Steane96,GotPhD,CRSS98,NC00}, \cite{Kni96a,Kni96b,Rai99,MU00,AK01,KKKS06}. 
We refer to \cite{McE87} for the basics of finite fields.

\begin{definition} \label{def:Tr}
	For a finite field $\GF(p)$ and its extension field \mbox{$\GF(q'={p}^m)$}, the \emph{trace}   from $\GF(q')$ to $\GF(p)$ is defined as
	\begin{equation*}
	\Tr^{q'}_{p}(\eta) = \eta + \eta^p + \dots + \eta^{{p}^{m-1}} ~\in~ \GF(p)
	\end{equation*}
	for any $\eta\in\GF(q')$. 
\end{definition}

Assume $q=2^l$ in the following if not otherwise specified.

\subsection{Symplectic inner product  vector space} \label{sec:GF2_ext}

Let $\tr(\cdot)$ be the trace from $\GF(q)$ to the ground field $\GF(2) = \{0,1\}$, i.e., for  $\eta\in\GF(q)$,
	\begin{equation*}
	\tr(\eta) \triangleq \Tr^q_2(\eta) = \eta + \eta^2 + \dots + \eta^{2^{l-1}} ~\in~ \{0,1\}.
	\end{equation*}

For two vectors $u=(u_1,\dots,u_N),v=(v_1,\dots,v_N)\in\GF(q)^N$, their \emph{Euclidean inner product} is
$$u\cdot v = \textstyle \sum_{j} u_j v_j.$$

Let $\omega$ be a primitive element of $\GF(q^2)$. Then $\{\omega,\omega^q\}$   is a basis for $\GF(q^2)$ over $\GF(q)$ \cite{McE87,MU00,AK01,KKKS06}. 
Consequently,   a vector $u$ in the vector space $\GF(q^2)^N$ can be written as 
	\begin{align}
	u = \omega u' + \omega^q u'', 	\notag 
	\end{align}
	where $u', u'' \in\GF(q)^N$. 
	That is, $\GF(q^2)^N$ is isomorphic to $\GF(q)^{2N}$, and a $u\in\GF(q^2)^N$ is uniquely represented by a $(u'|u'')\in\GF(q)^{2N}$ given $\{\omega,\omega^q\}$.
	To better align with the notation in the next subsection, we denote $u'$ by $u^X$ and $u''$ by $u^Z$.
	Equivalently, $u\in\GF(q^2)^N$ may be denoted by 
	\begin{align}
		u = \omega u^X + \omega^q u^Z	~\equiv~  	(u^X|u^Z) 	 	~~\in~	\GF(q)^{2N}. \label{eq:vXZ}
	\end{align}
\begin{definition} \label{def:ip}
For $u,v\in\GF(q^2)^N$,  {where $q=2^l$,} the (binary) \emph{symplectic inner product} of $u$ and $v$ is defined as
	\begin{equation} \label{eq:ip}
	\langle u,v \rangle = \tr(u^X\cdot v^Z + u^Z\cdot v^X) ~\in~ \{0,1\}.
	\end{equation}

\end{definition}
Equation \eqref{eq:ip} is a symplectic $\{0,1\}$-bilinear form, or a symplectic inner product.
This form is important in the study of quantum codes, to link a quantum code over $\GF(q)$ as a classical code $C$ over $\GF(q^2)$ that is symplectic self-orthogonal (i.e., $\langle u,v \rangle= 0$ for any two codewords $u,v\in C$).

The \emph{Hermitian inner product} of $u,v\in\GF(q^2)^N$ is
	$$u\cdot \bar v = \textstyle \sum_j u_j v_j^q,$$
where $\bar v \triangleq (v_1^q,v_2^q,\dots,v_N^q)$.

\begin{theorem} \label{thm:Herm} \cite{MU00,KKKS06}
	For $u,v\in\GF(q^2)^N$,  {where $q=2^l$,} their symplectic inner product \eqref{eq:ip} can be computed by 
	\begin{equation}\label{eq:tr_hm}
	\langle u,v \rangle = \tr\left(\frac{u\cdot \bar v + v\cdot \bar u}{\omega^2 + \omega^{2q}}\right).
	\end{equation}
\end{theorem}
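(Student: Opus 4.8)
The plan is to verify the identity by direct computation, expanding the right-hand side in the coordinates $(u^X|u^Z)$ and $(v^X|v^Z)$ furnished by the basis $\{\omega,\omega^q\}$ and then comparing the result with the definition \eqref{eq:ip}. The whole argument rests on two elementary facts about the Frobenius map $\eta\mapsto\eta^q$ on $\GF(q^2)$: it is additive, since $q=2^l$ is a power of the characteristic, and it fixes exactly the ground field $\GF(q)$, so that $a^q=a$ for every $a\in\GF(q)$ while $\omega^{q^2}=\omega$.

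First I would compute the Hermitian conjugate $\bar v$ in coordinates. Writing $v=\omega v^X+\omega^q v^Z$ with $v^X,v^Z\in\GF(q)^N$ and applying the Frobenius componentwise, additivity together with $(v^X)^q=v^X$, $(v^Z)^q=v^Z$, and $\omega^{q^2}=\omega$ gives $\bar v=\omega^q v^X+\omega v^Z$; in other words, conjugation merely swaps the roles of the coefficients $\omega$ and $\omega^q$. Next I would expand the Hermitian products $u\cdot\bar v$ and $v\cdot\bar u$ term by term. Each of the four coefficient products produces one of the three monomials $\omega^{q+1}$, $\omega^2$, $\omega^{2q}$, so that $u\cdot\bar v$ becomes a $\GF(q)$-linear combination of them with the Euclidean inner products $u^X\cdot v^X$, $u^Z\cdot v^Z$, $u^X\cdot v^Z$, $u^Z\cdot v^X$ as coefficients.

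The decisive step is to add $u\cdot\bar v$ and $v\cdot\bar u$ and invoke the characteristic. The diagonal contribution carries a coefficient $2\omega^{q+1}$ and therefore vanishes in characteristic $2$, eliminating every $u^X\cdot v^X$ and $u^Z\cdot v^Z$ term. What remains are the off-diagonal contributions, and the small point to check is that, after using the symmetry of the Euclidean inner product, the $\omega^2$ and $\omega^{2q}$ coefficients both collapse to the same scalar $S\triangleq u^X\cdot v^Z+u^Z\cdot v^X\in\GF(q)$. This yields $u\cdot\bar v+v\cdot\bar u=(\omega^2+\omega^{2q})S$, so dividing by $\omega^2+\omega^{2q}$ leaves exactly $S$, whose trace is $\langle u,v\rangle$ by \eqref{eq:ip}.

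The main obstacle — really the only thing beyond bookkeeping — is justifying that the denominator $\omega^2+\omega^{2q}$ is a nonzero element of $\GF(q)$, so that the quotient is well defined and taking its trace makes sense. For this I would note that $\omega^2+\omega^{2q}=(\omega+\omega^q)^2$ in characteristic $2$; it lies in $\GF(q)$ because it is fixed by the Frobenius (using $\omega^{q^2}=\omega$), and it is nonzero because $\omega+\omega^q=0$ would force $\omega^q=\omega$, i.e.\ $\omega\in\GF(q)$, contradicting the assumption that $\{\omega,\omega^q\}$ is a basis of $\GF(q^2)$ over $\GF(q)$.
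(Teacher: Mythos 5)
Your proposal is correct and follows essentially the same route as the paper: expand $u$, $v$, and the conjugates in the basis $\{\omega,\omega^q\}$, observe that the diagonal $\omega^{q+1}$ terms cancel in characteristic $2$, and collect the off-diagonal terms into $(\omega^2+\omega^{2q})(u^X\cdot v^Z+u^Z\cdot v^X)$ exactly as in the paper's equation \eqref{eq:hm2}. Your additional check that $\omega^2+\omega^{2q}=(\omega+\omega^q)^2$ is a nonzero element of $\GF(q)$ is a worthwhile detail that the paper leaves implicit.
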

\begin{proof}
	Write $u=\omega u^X+\omega^q u^Z$ and $v=\omega v^X+\omega^q v^Z$. Then
	\begin{equation} \label{eq:hm2}
	\begin{aligned}
	u\cdot \bar v + v\cdot \bar u &= u\cdot \bar v  + \overline{u\cdot \bar v}  \\ 
	&=(\omega^2 + \omega^{2q})(u^X\cdot v^Z + u^Z\cdot v^X),
	\end{aligned}
	\end{equation}
	which directly leads to \eqref{eq:tr_hm}.
\end{proof}
A consequence of Definition~\ref{def:ip} and Theorem~\ref{thm:Herm} is:
\begin{corollary} \label{col:h2ip}
	 {For $q=2^l$,} two vectors ${u,v\in\GF(q^2)^N}$ satisfy ${\langle u,v \rangle=0 }$ if one of the following conditions holds:
	 \begin{enumerate}
	 	\item   They are Hermitian orthogonal, i.e.,
	 	\begin{equation} \label{eq:h_0}
	 	u\cdot \bar v = 0.
	 	\end{equation}
		\item  They satisfy the \emph{CSS conditions} \mbox{\cite{CS96,Steane96}}:
		\begin{equation} \label{eq:css_0}
		\begin{aligned}
		& u\equiv(u^X|\,\mbf 0),\, v\equiv(v^X|\,\mbf 0); \\ 
		&\text{or~~} u\equiv(\mbf 0\,|u^Z),\, v\equiv(\mbf 0\,|v^Z); \\ 
		&\text{or~~} u\equiv(u^X|\,\mbf 0),\, v\equiv(\mbf 0\,|v^Z) \text{~with~} u^X\cdot v^Z =0. 
		\end{aligned}
		\end{equation}
	 \end{enumerate}
\end{corollary}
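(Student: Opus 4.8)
The plan is to treat the two sufficient conditions separately, since condition~1 is an immediate consequence of Theorem~\ref{thm:Herm}, whereas condition~2 follows directly from the defining formula~\eqref{eq:ip}.

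First I would dispose of condition~1. In the course of proving Theorem~\ref{thm:Herm}, equation~\eqref{eq:hm2} already records the identity $u\cdot\bar v + v\cdot\bar u = u\cdot\bar v + \overline{u\cdot\bar v}$; that is, $v\cdot\bar u$ is the Hermitian conjugate of $u\cdot\bar v$ (using $\eta^{q^2}=\eta$ for every $\eta\in\GF(q^2)$). Hence if $u$ and $v$ are Hermitian orthogonal, so that $u\cdot\bar v=0$ as in~\eqref{eq:h_0}, then $v\cdot\bar u=\bar 0=0$ as well, and the numerator of~\eqref{eq:tr_hm} vanishes. Since $\tr(0)=0$, this gives $\langle u,v\rangle=0$ at once.

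Next I would handle the three CSS cases of~\eqref{eq:css_0} by substituting the prescribed zero blocks into the symplectic form~\eqref{eq:ip}, namely $\langle u,v\rangle=\tr(u^X\cdot v^Z+u^Z\cdot v^X)$. In the first case $u^Z=v^Z=\mbf 0$, so both Euclidean products vanish; in the second case $u^X=v^X=\mbf 0$, again killing both terms; in the third case $u^Z=\mbf 0$ and $v^X=\mbf 0$, so that $u^Z\cdot v^X=0$ automatically while $u^X\cdot v^Z=0$ is precisely the extra hypothesis imposed. In every case the argument of the trace is $0$, whence $\langle u,v\rangle=0$.

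There is essentially no hard step here: the statement is a direct specialization of the two preceding results. The only point that needs a little care is recognizing that Hermitian orthogonality forces the full symmetric numerator $u\cdot\bar v+v\cdot\bar u$ to vanish, not merely one of its two summands; this is exactly the conjugate symmetry already extracted in~\eqref{eq:hm2}, so no new computation is required beyond reading off the zero entries.
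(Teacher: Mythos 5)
Your proposal is correct and follows exactly the route the paper intends: the paper gives no separate proof of Corollary~\ref{col:h2ip}, stating only that it is ``a consequence of Definition~\ref{def:ip} and Theorem~\ref{thm:Herm},'' and your argument fills in precisely those details (conjugate symmetry $v\cdot\bar u=\overline{u\cdot\bar v}$ for condition~1, direct substitution of the zero blocks into \eqref{eq:ip} for condition~2). No gaps.
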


\begin{example} \label{ex:q=2}
	If $q=2$, we have $\omega^2 + \omega^{2q} = 1$ in $\GF(4)$ and $\tr(\cdot)=\Tr_2^2(\cdot)$ is the identity map. 
		For $u,v\in\GF(4)^N$,
		\begin{align*}
		\langle u,v \rangle = u^X\cdot v^Z + u^Z\cdot v^X.
		\end{align*}
	It is well-known that $\langle u,v \rangle = \Tr^4_2(u\cdot \bar v)$ \cite{CRSS98}. 
	Here, $\Tr^4_2(0)=\Tr^4_2(1)=0$ and $\Tr^4_2(\omega)=\Tr^4_2(\omega^2)=1$ by Definition~\ref{def:Tr}.
\end{example}

	Given   $\eta \in \GF(q^2)$, define
	\begin{align*}
	[\eta] \,\, &= \{\xi\in\GF(q^2) : \langle \eta ,\xi\rangle = 0 \}, \\
	[\eta]^{\text{c}} &= \{\xi\in\GF(q^2) : \langle \eta ,\xi\rangle = 1 \}.
	\end{align*}
Then $ [\eta]$ and  $[\eta]^{\text{c}}$ are two sets with elements that commute and anticommute with a certain element $\eta$, respectively, and $\{[\eta], [\eta]^{\text{c}} \}$ is a partition of  $\GF(q^2)$. 
\begin{example} \label{ex:q^2=4}
	For the special case of $q=2$, $\GF(4)=\{0,~ 1,~\omega,~\omega^2\}$ with $1=\omega+\omega^2$. 
	Then $\langle 1,\omega \rangle= \langle \omega,\omega^2 \rangle=\langle 1,\omega^2 \rangle=1$.
	Thus $[\omega]= \{0,\omega\}$ and $[{\omega}]^{\text{c}}= \{1,\omega^2\}$.
\end{example}

	For two sets $A$ and $B$, define $A\setminus B = \{a\in A: a\notin B\}$. 
	If $B=\{b\}$, we may write $A\setminus\{b\}$ as $A\setminus b$ for simplicity.

\begin{theorem} \label{thm:tr_0.5}
	For $q=2^l$,  there exists an additive subgroup $S\subset \GF(q)$ of size $q/2$ such that  for {$\eta\in S$}, $\tr(\eta)=0$ and
	for $\eta' \in \GF(q)\setminus S$, $\tr(\eta')=1$.
\end{theorem}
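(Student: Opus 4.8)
The plan is to realize $S$ as the kernel of the trace map and to check that this kernel has index two in $\GF(q)$. I would begin by recording that $\tr(\cdot)=\Tr^q_2(\cdot)$ is additive, i.e., $\tr(\eta+\xi)=\tr(\eta)+\tr(\xi)$ for all $\eta,\xi\in\GF(q)$. This is immediate in characteristic $2$, where the squaring (Frobenius) map satisfies $(\eta+\xi)^2=\eta^2+\xi^2$, so each summand $\eta^{2^i}$ in the definition of $\tr$ is additive and hence so is their sum. Consequently the set $S\triangleq\{\eta\in\GF(q):\tr(\eta)=0\}$ is closed under addition and contains $0$, so it is an additive subgroup of $\GF(q)$.

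It then remains to show $|S|=q/2$, for which it suffices to prove that $\tr$ attains the value $1$ somewhere, i.e., that $\tr$ is not identically zero. Here I would use a polynomial-degree argument: the quantity $\tr(\eta)=\eta+\eta^2+\cdots+\eta^{2^{l-1}}$ is the evaluation at $\eta$ of the polynomial $f(x)=x+x^2+\cdots+x^{2^{l-1}}\in\GF(q)[x]$, whose degree is $2^{l-1}=q/2$. A nonzero polynomial of degree $q/2$ has at most $q/2$ roots in $\GF(q)$; since $|\GF(q)|=q>q/2$, the map $\eta\mapsto\tr(\eta)$ cannot vanish on all of $\GF(q)$. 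Hence $\tr$ is not identically zero, and because its image is an additive subgroup of $\GF(2)=\{0,1\}$ distinct from $\{0\}$, the image is all of $\GF(2)$, so $\tr$ is a surjective additive homomorphism.

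Finally, I would simply count cosets (or invoke the first isomorphism theorem): since $\tr:\GF(q)\to\GF(2)$ is an additive surjection with kernel $S$, the quotient $\GF(q)/S$ is isomorphic to $\GF(2)$, so $S$ has index $2$ and $|S|=q/2$. The unique nontrivial coset is exactly $\GF(q)\setminus S$, on which $\tr$ is constantly $1$, which gives the asserted dichotomy.

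The step I expect to be the main obstacle is establishing surjectivity of the trace, equivalently $S\neq\GF(q)$; the additivity and the concluding index computation are routine. The degree-counting argument above settles it cleanly, although one could instead cite the standard fact that the trace of a finite (separable) field extension is a nonzero $\GF(2)$-linear functional.
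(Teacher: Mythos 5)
Your proof is correct, but it takes a different route from the paper's. The paper argues via the explicit factorization $x^{2^l}+x=\prod_{b\in\{0,1\}}\bigl(x+x^2+\cdots+x^{2^{l-1}}+b\bigr)$: since every element of $\GF(q)$ is a root of the separable polynomial $x^{2^l}+x$ and each factor has degree $q/2$, each factor contributes exactly $q/2$ distinct roots, and the roots of the factor with constant term $b$ are precisely the elements of trace $b$. This counts both level sets of $\tr$ simultaneously in one stroke. You instead realize $S$ as the kernel of the trace, verify additivity via the Frobenius, use the degree bound on $x+x^2+\cdots+x^{2^{l-1}}$ only to rule out $\tr\equiv 0$, and then get $|S|=q/2$ from the first isomorphism theorem. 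Your approach buys an explicit verification that $S$ is an additive subgroup (which the theorem asserts but the paper's proof leaves implicit, relying on the reader to note that the factor polynomial is linearized), and it generalizes verbatim to any separable extension where the trace is a nonzero linear functional. The paper's approach buys a slightly more concrete description: it exhibits both $S$ and its complement as root sets of explicit polynomials, which is the form the authors reuse when extending to $\GF(p^l)$ via the analogous factorization. Both hinge on the same degree count $2^{l-1}=q/2$; they just deploy it at different points.
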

\begin{proof}
	All the elements in \mbox{$\GF(q)$} are the roots of $x^{2^l}+x$, which has a factorization 
	$$x^{2^l}+x = \textstyle \prod_{b\in\{0,1\}}(x + x^2 + \dots + x^{2^{l-1}} + b),$$
	where the each factor $(x + x^2 + \dots + x^{2^{l-1}} 
	+ b)$ contains $2^{l-1} = q/2$ distinct roots in $\GF(q)$ for $b=0$ or $1$.
	If $\eta$ is a root of  $(x + x^2 + \dots + x^{2^{l-1}} 
	+ b)$, then it can be shown that  $\tr(\eta)=b$. 
Thus we have the statement.
\end{proof}
{Note that Theorem~\ref{thm:tr_0.5} easily extends for $\GF(p^l)$ by a similar factorization equality in \cite[Theorem 8.1(e)]{McE87}.}

\begin{corollary} \label{col:tr_half}
	Given  $\eta\ne 0$ in $\GF(q^2)$, {where $q=2^l$,} 
	$$|[\eta]|=|[\eta]^{\text{c}}|=q^2/2.$$
\end{corollary}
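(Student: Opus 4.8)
The plan is to regard the map $\xi \mapsto \langle \eta, \xi \rangle$ as a linear functional and to count the sizes of its two value-sets by pulling the trace back through an intermediate $\GF(q)$-linear map, so that Theorem~\ref{thm:tr_0.5} applies directly.

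First I would specialize Definition~\ref{def:ip} to the scalar case $N=1$: writing $\eta = \omega \eta^X + \omega^q \eta^Z$ and $\xi = \omega \xi^X + \omega^q \xi^Z$ with $\eta^X,\eta^Z,\xi^X,\xi^Z \in \GF(q)$, the symplectic inner product becomes $\langle \eta, \xi \rangle = \tr(\eta^X \xi^Z + \eta^Z \xi^X)$. This exhibits $\xi \mapsto \langle \eta, \xi \rangle$ as the composition of the trace $\tr:\GF(q)\to\{0,1\}$ with the inner map $\phi(\xi) \triangleq \eta^X \xi^Z + \eta^Z \xi^X \in \GF(q)$, which is $\GF(q)$-linear in the pair $(\xi^X,\xi^Z)$.

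Next I would show that $\phi$ is a surjective functional with uniform fibers. Since $\eta \ne 0$ forces $(\eta^X,\eta^Z) \ne (0,0)$, the functional $\phi:\GF(q)^2 \to \GF(q)$ is nonzero; for instance, if $\eta^X \ne 0$ then fixing $\xi^X = 0$ and letting $\xi^Z$ range over $\GF(q)$ makes $\phi$ attain every value of $\GF(q)$, because multiplication by $\eta^X$ permutes $\GF(q)$. A nonzero $\GF(q)$-linear functional on the two-dimensional space $\GF(q)^2 \cong \GF(q^2)$ has a one-dimensional kernel, so each value in $\GF(q)$ is taken by exactly $q$ choices of $\xi$.

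Finally I would combine this with Theorem~\ref{thm:tr_0.5}, which says $\tr$ equals $0$ on exactly $q/2$ elements of $\GF(q)$ and $1$ on the remaining $q/2$. Pulling these two sets back through $\phi$ and multiplying by the common fiber size $q$ yields $|[\eta]| = |[\eta]^{\text{c}}| = (q/2)\cdot q = q^2/2$, consistent with $\{[\eta],[\eta]^{\text{c}}\}$ being a partition of $\GF(q^2)$. The only real obstacle is the surjectivity (equivalently, nondegeneracy) step, which is exactly where the hypothesis $\eta \ne 0$ is used; the rest is bookkeeping. An essentially equivalent route would instead treat $\xi \mapsto \langle \eta, \xi\rangle$ as a $\GF(2)$-linear functional on the $2l$-dimensional $\GF(2)$-space $\GF(q^2)$ and identify $[\eta]$ as its kernel hyperplane of size $2^{2l-1} = q^2/2$.
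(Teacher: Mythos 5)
Your proof is correct and follows essentially the same route as the paper: both reduce $\langle\eta,\xi\rangle$ to $\tr(\eta^X\xi^Z+\eta^Z\xi^X)$ and invoke Theorem~\ref{thm:tr_0.5} to see that the trace is balanced on $\GF(q)$. The only difference is organizational: where the paper fixes $\nu=\eta^X\xi^Z$ and pairs it against every $\chi=\eta^Z\xi^X$ in a two-case argument (splitting on whether $\eta^X=0$), you observe once that $(\xi^X,\xi^Z)\mapsto\eta^X\xi^Z+\eta^Z\xi^X$ is a nonzero $\GF(q)$-linear functional with uniform fibers of size $q$, which removes the case split and makes the final count $(q/2)\cdot q=q^2/2$ immediate.
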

\begin{proof}
	Let $\eta= (\eta^X|\eta^Z)\in\GF(q^2)$ for $\eta^X,\eta^Z\in\GF(q)$.
	Since $\eta\neq 0$, at least $\eta^X\neq 0$ or $\eta^Z\neq 0$.	
	Suppose that $\eta^Z\neq 0$ without loss of generality. 
	For $\xi= (\xi^X|\xi^Z)\in\GF(q^2)$ with $\xi^X,\xi^Z\in\GF(q)$,
	we have $\langle \eta,\xi \rangle =  \tr(\eta^X \xi^Z+\eta^Z \xi^X)= \tr(\eta^X \xi^Z)+\tr(\eta^Z \xi^X)$.
	Then $\{ \eta^Z \xi^X: \xi^X\in\GF(q)\}$ is a permutation of $\{\xi^X\in\GF(q)\}$. 
	By Theorem~\ref{thm:tr_0.5}, there exists  $S\subset \{ \eta^Z \xi^X: \xi^X\in\GF(q)\}$  with $|S|=q/2$
	such that for all $\mu\in S$, $\tr(\mu)=0$ and for $\mu'\in\GF(q)\setminus S$, $\tr(\mu')=1$.
	If $\eta^X=0$, then we are done.
	Consider $\eta^X\neq 0$. 
	Note that a fixed $\nu\in\{\eta^X \xi^Z:\xi^Z\in\GF(q)\}$ is paired with each
	${\chi}\in \{\eta^Z \xi^X: \xi^X\in\GF(q)\}$ such that
		\[\tr(\nu)+\tr(\chi)=
		\begin{cases}
		\tr(\nu),& \hbox{ if $\chi\in S$};\\
		\tr(\nu)+1,& \hbox{ if $\chi\in\GF(q)\setminus S$.} \\
		\end{cases}
		\]	
	For a fixed $\nu$, a half of the pairs ${ \{(\nu,\chi)\} }$ have $ \tr(\nu)+\tr(\chi)=0 $, 
	while the other half have $ \tr(\nu)+\tr(\chi)=1 $.
	This holds for any $\nu$. Thus $|[\eta]|=|[\eta]^{\text{c}}|=q^2/2.$
\end{proof}

The property that $|[\eta]|=|[\eta]^c|$ has its merit in implementation (for an important operator in Definition~\ref{def:la}).

\newcommand*\GFtwol{ $\GF(2^l)$ }						
\subsection{Stabilizer codes over \protect\GFtwol}		

We review some basics of nonbinary quantum codes \cite{Kni96a,Kni96b,Rai99,MU00,AK01,KKKS06} and then define the decoding problem.

Let $\RR$ be the field of real numbers, $\ZZ_+$ be the set of positive integers, and $\CC$ be the field of complex numbers.

Consider a $q$-ary quantum system with $q=2^l$, for some integer $l\ge 1$.
A \emph{qudit} is a unit vector in~$\CC^q$. 
Without loss of generality, let \mbox{$\{\ket \eta : \eta\in\GF(q)\}$} be a set of orthonormal basis for $\CC^q$ such that
there is a set of (generalized) Pauli operators $\{ X(\xi), Z(\xi'): \xi,\xi'\in\GF(q) \}$ on~$\CC^q$,   where
\begin{equation} \label{eq:basis}
X(\xi)\ket \eta = \ket{\eta+\xi}, \quad Z(\xi)\ket \eta = (-1)^{\tr(\eta \xi)}\ket \eta.
\end{equation}
Note that $X(0) = Z(0) = I$, the identity.

An $N$-qudit state is a unit vector in $(\CC^q)^{\otimes N} = \CC^{q^N}$.
For $u= (u_1,\dots,u_N)\in\GF(q)^N$, we define the (generalized) $N$-fold Pauli operators:
	$$X(u) = X(u_1)\otimes\cdots\otimes X(u_N), ~~ Z(u) = Z(u_1)\otimes\cdots\otimes Z(u_N).$$
	%
The set  $\{X(u^X)Z(u^Z): u^X,u^Z\in\GF(q)^N\}$ forms a  basis for the linear operators on $\CC^{q^N}$.
It suffices to consider these discrete errors for error correction according to the error discretization theorem \cite{NC00}.
To form a group of operators with closure, we include phases $\pm 1$ (due to $(-1)^{\tr(\eta \xi)}\in\{\pm 1\}$ in~\eqref{eq:basis} for $q=2^l$) and consider the group
	$$\sG_N \teq \{\pm X(u^X)Z(u^Z): u^X,u^Z\in\GF(q)^N\}.$$
Elements of $\sG_N$ are connected to the elements of the vector space $\GF(q^2)^N$ 
by a homomorphism $\varphi:\sG_N \to \GF(q^2)^N$ defined as follows.  
For   $E = (-1)^c \otimes_{n=1}^N \left(X(u^X_{n}) Z(u^Z_{n})\right)\in\sG_N$, where $c\in\{0,1\}$ and $u^X,u^Z\in\GF(q)^N$,
\begin{equation} \label{eq:E2u}
\varphi(E) \teq \omega u^X + \omega^{q} u^Z \equiv (u^X|u^Z),
\end{equation}
where $\omega$ is a primitive element of $\GF(q^2)$ as discussed in the previous subsection.
Note that the kernel of $\varphi$ is $\{\pm I^{\otimes N}\}$, i.e., the phase $(-1)^c$ of $E$ does not appear in $\varphi(E)$.

Two operators $E,F\in\sG_N$ either \emph{commute} \mbox{($EF= FE$)} or \emph{anticommute} \mbox{($EF = - FE$)} with each other.
Suppose that $\varphi(E)=(u^X|u^Z)$ and $\varphi(F)=(v^X|v^Z)$. Then
\begin{align*}
EF &= (-1)^{\tr(u^X\cdot v^Z + u^Z\cdot v^X)}FE\\
   &= (-1)^{\langle \varphi(E),\varphi(F) \rangle}FE.
\end{align*}

\begin{definition} \cite{AK01,KKKS06}
	A \emph{stabilizer group} $\sS$ is an Abelian subgroup of $\sG_N$ such that  $-I^{\otimes N} \notin \sS$.
	The \emph{stabilizer code} $\sC(\sS)$ defined by $\sS$ is a subspace of $\CC^{q^N}$ that is the joint \mbox{$(+1)$-eigenspace} of all the operators in~$\sS$, i.e.,
	\begin{equation*}
	\sC(\sS) = \{\ket{\psi} \in \CC^{q^N}:\, F\ket{\psi} = \ket{\psi} ~\forall\, F\in\sS \}.
	\end{equation*}
	Every element in $\sS$ is called a \emph{stabilizer}.
\end{definition}

If an occurred error $E\in\sG_N$ anticommutes with certain stabilizers, it can be detected by measuring the eigenvaules of an independent generating set of $\cS$.
If any of the eigenvalues is $-1$, we know that there is an error. 
Let $ \sS^{\perp} = \{E'\in\sG_N: E'F = FE' ~~\forall~ F\in\sS\} $.
Apparently,  if $E\in \cS^\perp\setminus \pm \sS$, then it cannot be detected and will lead to a logical error.

For $u\in\GF(q^2)^N$, let $|u|$ be \emph{Hamming weight} of $u$. 
For $E\in\sG_N$, define the \emph{weight} of $E$ as {$\wt(E) \teq |\varphi(E)|$}.
The \emph{minimum distance} of $\sC(\cS)$ is defined as $$D\teq \min\{\wt(F):\, F \in  { \cS^\perp \setminus \pm\cS }  \}.$$ 
	$\sC(\sS)$ can correct an arbitrary error $E\in\sG_N$ with weight $\wt(E) \le \lfloor\frac{D-1}{2}\rfloor$.
	If $\sS$ has $l(N-K)$ independent generators, then $\sC(\sS)$ has dimension $q^{K}$.
In this case, $\sC(\sS)$ is called an $[[N,K,D]]_q$ stabilizer code.

Equivalently we may study quantum stabilizer codes using the language of finite fields. 
Let $C \triangleq\varphi(\sS) \,\subset\, \GF(q^2)^N.$ 
Since $\sS$ is an Abelian group, $C$ is an additive code over $\GF(q^2)$ that is self-orthogonal with respect to $\langle\cdot,\cdot\rangle$ in \eqref{eq:ip}, i.e,
$C$ is contained in the symplectic dual $C^\perp$ of $C$,
\begin{equation*}
C \subseteq C^\perp \teq \{u\in \GF(q^2)^N :\, \langle u, v \rangle = 0\, ~\forall~ v\in C \}.
\end{equation*}
Thus $\varphi(\sS^\perp) = C^\perp$ and 
	$
	D = \min\{ |v| :\, v\in C^\perp \setminus C\}.
	$

The encoding and decoding procedures of $\sC(\sS)$ are strongly related to $C$ and $C^\perp$. 
The encoding procedure,  {as well as the measurement circuit,} can be referred to \cite{NC00,GRB03}.
We discuss the decoding procedure in the following, which is similar to the classical syndrome-based decoding of $C^\perp$.

The phase of a Pauli operator can be ignored when discussing the decoding, so, without loss of generality, in the following we consider errors or stabilizers with phase $+1$.

Suppose that a codeword (state) in $\sC(\sS)$ is disturbed by an \mbox{unknown} error $E = E_1\otimes\cdots \otimes E_N\in\sG_N$.
We would like to infer $e = \varphi(E) \in\GF(q^2)^N$ by measuring 
a sequence of $M$ stabilizers  $\{S_m\}_{m=1}^M$, where $M\ge l(N-K)$ since $\cS$ has $l(N-K)$ independent generators. 
Write 
\begin{equation} \label{eq:S_m}
S_m = S_{m1} \otimes \cdots \otimes S_{mN},
\end{equation}
where $S_{mn}\in\sG_1$.
Since the error either commutes or anticommutes with a stabilizer, measuring a stabilizer returns outcome $+1$ or $-1$, which gives us a bit of error information. 
Let $H\in\GF(q^2)^{M\times N}$ with $(m,n)$-entry $H_{mn} = \varphi(S_{mn})$. 
Then $H$ is called a \emph{check matrix} of the stabilizer code $\sC(\sS)$, and $C$ is the row space of $H$, denoted by $C=\row(H)$.
After the measurements defined by $\{S_m\}_{m=1}^M$, we will obtain a \emph{binary error syndrome} $z = (z_1,\dots,z_M) \in\{0,1\}^M$, where
\begin{equation} \label{eq:zm}
z_m = \langle \varphi(E), \varphi(S_m) \rangle = \langle e, H_m \rangle,  
\end{equation}
where $H_m\in\GF(q^2)^N$ is the $m$-th row of $H$, called a \emph{check}. 
The decoding problem is as follows. 

\vspace*{4pt}
{\bf Decoding a stabilizer code over $\GF(q=2^l)$}:
Given a check matrix $H\in\GF(q^2)^{M\times N}$, a binary syndrome $z\in\{0,1\}^M$ of some (unknown) $e\in\GF(q^2)^N$, and certain characteristics of the error model, 
the decoder has to infer an $\hat e\in\GF(q^2)^N$ such that $\langle \hat e, H_m \rangle = z_m$ for $m=1,2,\dots,M$ and $\hat e - e \in \row(H)$ 
 with  probability as high as possible.
\vspace*{4pt}

\Figure[t!][width=8.0cm]{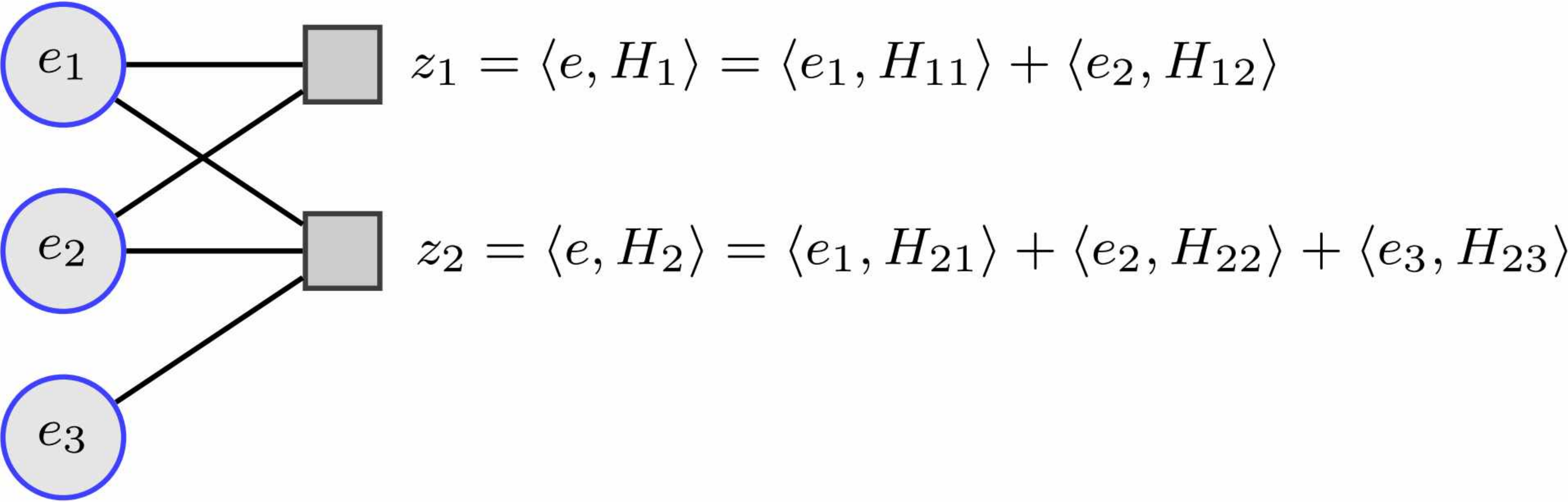}
{The Tanner graph of $H = \left[ {H_{11}\atop H_{21}} {H_{12}\atop H_{22}} {0\atop H_{23}} \right]$.\label{fig:H_2x3}}

Decoding by BP needs a Tanner graph defined by the check matrix $H$ of the code $\sC(\sS)$.
The Tanner graph  is a bipartite graph consisting of $N$ \emph{variable nodes} (representing $\{e_n\}_{n=1}^N$) and $M$ \emph{check nodes} (representing $\{z_m\}_{m=1}^M$), and there is an edge, with \emph{type} $H_{mn}$, connecting variable node~$n$ and check node~$m$ if \mbox{$H_{mn}\ne 0$}. 
An example is shown in Fig.~\ref{fig:H_2x3}.
Let 
\begin{equation*}  
\begin{aligned}
& \sN(m) = \{n : H_{mn} \ne 0\},\\
& \sM(n) = \{m : H_{mn} \ne 0\},
\end{aligned}
\end{equation*}
which are the neighboring sets of the check node $m$ and the variable node $n$, respectively.  
BP infers $\{e_n\}_{n=1}^N$ by passing messages between neighboring nodes in the Tanner graph.

\subsection{Stabilizer code constructions}

We finish this section with some stabilizer code constructions that will be used in our discussions and simulations.

	A check matrix $H\in\GF(2^{2l})^{M\times N}$ of a $2^l$-ary quantum code may be constructed from an $\frac{M}{2l}\times N$  parity-check matrix $\tilde H$ of a classical linear code over $\GF(2^{2l})$ 
such that $\tilde H$ is Hermitian self-orthogonal (cf.~Corollary~\ref{col:h2ip}).
Given such an $\tilde H$, we have a check matrix $H$ by the \emph{CSS extension}:
\begin{equation} \label{eq:css_w}
H = \left[ \begin{smallmatrix} \tilde H \\ \omega\tilde H \\ \vdots\\ \omega^{2l-1}\tilde H \end{smallmatrix} \right],
\end{equation}
where~$\omega$~is a primitive element of $\GF(2^{2l})$.
A simple construction of \eqref{eq:css_w} is to find a binary $\tilde H$ such that ${\tilde H \tilde{H}^T = O}$,
where $O$ is a zero matrix of appropriate dimension.\footnote{
	If $\tilde H$ is binary and $\tilde H \tilde{H}^T = O$, the supports of any two rows $u,\, v$ of $\tilde H$ have an overlap of even size; 
	consequently, any two rows $\omega^i u,\, \omega^j v$ of $H$ in \eqref{eq:css_w} satisfy $(\omega^i u)\cdot\overline{(\omega^j v)} = 0$ 
	(i.e. they satisfy~\eqref{eq:h_0}, as required).
	}

If $\tilde H$ in \eqref{eq:css_w} is replaced by a set of Euclidean orthogonal binary matrices 
	$\{H^{(i)}\}_{i=0}^{2l-1}$, 
	where \mbox{$H^{(i)}(H^{(j)})^T = O$} for $i\ne j$, 
	then we have the \emph{generalized CSS extension}:\footnote{
		We do not need $H^{(i)} (H^{(i)})^T = O$, since two rows $u,v$ of $H^{(i)}$ become two rows $\omega^i u, \omega^i v$ of $\omega^i H^{(i)}$ 
		in \eqref{eq:css_g}, with $\langle \omega^i u, \omega^i v \rangle =0$ by Theorem~\ref{thm:Herm}.
		}
		%
		%
		%
	\begin{equation} \label{eq:css_g}
	H = \left[ \begin{smallmatrix} H^{(0)} \\ \omega H^{(1)}  \\ \vdots\\ \omega^{2l-1} H^{(2l-1)}  \end{smallmatrix} \right].
	\end{equation}

Note that although \eqref{eq:css_w} and \eqref{eq:css_g} are called (generalized) CSS extension, the rows of $H$ may not necessarily have the form defined in the CSS conditions~\eqref{eq:css_0}.   

\section{BP decoding of stabilizer codes} \label{sec:dec}

	In the following discussion we assume that  
		$$\GF(q^2)=\{0,1,\zeta, \zeta^2,\dots, \zeta^{q^2-2}\},$$  
	where $\zeta$ is a primitive root of $\GF(q^2)$ and $q=2^l$. 

%
Consider that we want to solve the quantum decoding problem mentioned after \eqref{eq:zm}.
First, express \eqref{eq:zm} as
\begin{equation} \label{eq:z_m}
z_m = \langle e,H_m \rangle = \textstyle \sum_{n=1}^N \langle e_n, H_{mn} \rangle  \quad \text{(mod 2)}.
\end{equation}
The addition  for syndrome generation will always be mod~2 without further specification.
We may use $z = \langle e,H \rangle$ to mean that $z_m = \langle e,H_m \rangle ~\forall~ m$.

Consider a memoryless error model that each qudit suffers an independent error, 
i.e., the probability that the error vector is $u\in\GF(q^2)^N$ is $P(e=u) = \prod_{n=1}^N P(e_n=u_n)$
for some distribution ${P(e_n=\eta)=p_\eta}$ over $\{\eta\in \GF(q^2)\}$,  
where ${\sum_\eta p_\eta = 1}$.
In BP, the goal is to find the most probable value of each $e_n\in\GF(q^2)$ given $z$. 
Initially each $e_n$ has value in $\GF(q^2)$ according to the initial channel characteristics ${ \{P(e_n=\eta) = p_\eta\}_{\eta\in\GF(q^2)} }$. 
The likelihood of each value of $e_n$ is recorded and continuously updated according to (\ref{eq:z_m}) and the initial channel characteristics.

\subsection{LLR-BP for stabilizer codes over \GFtwol} \label{sec:alg}	

\begin{definition} \label{def:LLR}
Suppose that we have an (unknown) error vector $e=(e_1,\dots,e_N)\in \GF(q^2)^N$.
	For  $n=1,\dots,N$, the initial belief  of  $e_n$ being 0 rather than  $\zeta^i\neq 0\in\GF(q^2)$ for $i\in\{0,1,\dots,q^2- {2}\}$ is the  \emph{log-likelihood ratio} (LLR)
	\begin{equation*}
	\Lambda_n^{(i)} = \ln\frac{P(e_n=0)}{P(e_n=\zeta^i)} \in \RR.
	\end{equation*}
	The initial belief distribution of $e_n$ is the LLR vector 
	\begin{equation*}
	\Lambda_n = (\Lambda_n^{( {0})}, \Lambda_n^{( {1})}, \dots, \Lambda_n^{(q^2- {2})}) \in\RR^{q^2-1}.
	\end{equation*}
\end{definition}

\begin{definition} \label{def:ch}
	A \emph{generalized depolarizing channel} with error rate $\epsilon$ is a memoryless $q^2$-ary symmetric channel that 
	takes each nonzero error $\zeta^i \in \GF(q^2)$, $i\in\{0,1,\dots,q^2-2\}$, with probability $\frac{\epsilon}{q^2-1}$ and no error with probability $1-\epsilon$, i.e.,
	\begin{equation*}
	\Lambda_n^{(i)} = \ln\left( (1-\epsilon)(q^2-1)/\epsilon \right), ~~\forall~~ i=0,1,\dots,q^2-2.
	\end{equation*}
\end{definition}

 The initial beliefs $\{\Lambda_n\}_{n=1}^N$ are the information about the channel statistics and will be kept constant during decoding.

\begin{figure*}
\normalsize			
\newcounter{MYtempeqncnt}
\setcounter{MYtempeqncnt}{13}
\setcounter{equation}{17}
	~\\[-30pt]	
	\begin{align}
	\Gamma_n^{(i)} &=\Lambda_n^{(i)} + \sum_{m\in\sM(n)}\ln \frac{
	\sum_{u|_{\sN(m)\setminus n}: \atop { \langle u|_{\sN(m)\setminus n}, H_m|_{\sN(m)\setminus n} \rangle \atop = \langle 0, H_{mn} \rangle +z_m } } \left(\prod_{n'\in\sN(m)\setminus n} P(e_{n'}=u_{n'})\right) }{ 
	\sum_{u|_{\sN(m)\setminus n}: \atop { \langle u|_{\sN(m)\setminus n}, H_m|_{\sN(m)\setminus n} \rangle \atop = \langle \zeta^i, H_{mn} \rangle +z_m } } \left(\prod_{n'\in\sN(m)\setminus n} P(e_{n'}=u_{n'})\right) }  \label{eq:v_last0}\\
	&= \Lambda_n^{(i)} + \sum_{m\in\sM(n) \atop  \langle \zeta^i, H_{mn} \rangle=1  } (-1)^{z_m} \ln \frac{
	\sum_{u|_{\sN(m)\setminus n}: \atop { \langle u|_{\sN(m)\setminus n}, H_m|_{\sN(m)\setminus n} \rangle = 0 } } \left(\prod_{n'\in\sN(m)\setminus n} P(e_{n'}=u_{n'})\right) }{
	\sum_{u|_{\sN(m)\setminus n}: \atop { \langle u|_{\sN(m)\setminus n}, H_m|_{\sN(m)\setminus n} \rangle = 1 } } \left(\prod_{n'\in\sN(m)\setminus n} P(e_{n'}=u_{n'})\right) }. \label{eq:v_last}
	\end{align}
\setcounter{equation}{\value{MYtempeqncnt}}	
\hrulefill									
\end{figure*}

Given ${\{\Lambda_n\}_{n=1}^N,\ {z\in\{0,1\}^M}, \text{ and } H\in\GF(q^2)^{M\times N} }$,
BP intends to find the most probable value of  $e_n$ for each $n$ by computing 
a set of \emph{running beliefs} $ \{\Gamma_n = (\Gamma_n^{(0)}, \dots, \Gamma_n^{(q^2-2)})\in\RR^{q^2-1}\}_{n=1}^N $, 
where $\Gamma_n^{(i)}$ is to estimate
\begin{align}
 &\ln\frac{ P(e_n=0\,\,\mid  \text{syndrome } z) }{ P(e_n=\zeta^i\mid  \text{syndrome } z  )}\notag 
 =\ln \frac{ P(e_n=0,\,\,\, \text{syndrome }z) }{ P(e_n=\zeta^i,\, \text{syndrome } z )} \notag \\ 
 &=\ln \frac{ \sum_{u\in\GF(q^2)^N: u_n = 0}      ~\, \mathbbm 1_{(\langle u,H \rangle=z)} \, P(e=u) }
			{ \sum_{u\in\GF(q^2)^N: u_n = \zeta^i} \, \mathbbm 1_{(\langle u,H \rangle=z)} \, P(e=u) }, \label{eq:bp1} 
\end{align}
where $\mathbbm 1_{(\langle u,H \rangle=z)} = 1$, if $\langle u,H \rangle = z$, and \mbox{$\mathbbm 1_{(\langle u,H \rangle=z)} = 0$}, otherwise.
For $u\in\GF(q^2)^N$, let $u|_{\sN(m)}$ be the restriction of $u$ to $\sN(m)$. 
Note that $H_m|_{\sN(m)}$ is the vector consisting of the nonzero entries of $H_m$.
Then \eqref{eq:z_m} can be written as
\begin{align} 
z_m &= \langle ~ e|_{\sN(m)},~ H_m|_{\sN(m)} ~ \rangle, ~~ \text{or} \label{eq:chk} \\
\langle e_n, H_{mn} \rangle + z_m &= \langle ~ e|_{\sN(m)\setminus n},~ H_m|_{\sN(m)\setminus n} ~ \rangle \label{eq:upd_n}.
\end{align}
By~\eqref{eq:chk} and \eqref{eq:upd_n}, terms in \eqref{eq:bp1} can be approximated by the distributive law \cite{AM00}, 
if the check matrix is sparse, as follows: 
\small
\begin{align}
& P(e_n=\zeta^i,\, \text{syndrome } z) \notag\\ 
=& \sum_{u\in\GF(q^2)^N: u_n=\zeta^i, \atop \langle u, H_m \rangle = z_m \forall m} P(e=u) \notag \\
=& \sum_{u\in\GF(q^2)^N: u_n=\zeta^i, \atop \langle ~ u|_{\sN(m)},~ H_m|_{\sN(m)} ~ \rangle = z_m \forall m} \bigg( \textstyle \prod\limits_{j=1}^N P(e_{j}=u_{j}) \bigg) \notag \\
\approxprop & { \textstyle \prod\limits_{m\in\sM(n)} \Bigg( \sum_{u|_{\sN(m)\setminus n}: \atop { \langle u|_{\sN(m)\setminus n}, H_m|_{\sN(m)\setminus n} \rangle \atop =\, \langle \zeta^i, H_{mn} \rangle \,+\, z_m } } P(e|_{\sN(m)\setminus n}=u|_{\sN(m)\setminus n}) \Bigg) } \notag\\\
&\times P(e_n=\zeta^i), \label{eq:sum_all}
\end{align}
\normalsize
where ${ P(e|_{\sN(m)\setminus n}=u|_{\sN(m)\setminus n}) = \textstyle \prod\limits_{n'\in\sN(m)\setminus n} P(e_{n'}=u_{n'}) }$ 
after $P(e_n=\zeta^i)$ is separated.\footnote{
	 To get the result of \eqref{eq:sum_all}, one can also start from the Bayes' theorem: 
		${ P(e_n=\zeta^i,\, \text{syndrome } z) = P(\text{syndrome } z \mid e_n=\zeta^i) \times P(e_n=\zeta^i) }$.
	}
It is similar for approximating ${ P(e_n=0,\, \text{syndrome } z) }$.
When the Tanner graph is a tree, 
the (proportional) approximation in \eqref{eq:sum_all} becomes an equality after iterative updates (by a flow described after~\eqref{eq:v_by_tanh}).
The approximation is usually good for a sparse $H$, and then we have
the $\Gamma_{n}^{(i)}$ for estimating \eqref{eq:bp1} expressed as in \eqref{eq:v_last0} and \eqref{eq:v_last} at the top of this page.

\setcounter{equation}{19}				

The numerator (or denominator) in the logarithm of~\eqref{eq:v_last} is close to the usual sum-product computation of BP over $\GF(2)$, which has efficient hyperbolic tangent rule for computation \cite{Gal63}, \cite[Sec.~2.5.2]{RU08}; 
however, $u|_{\sN(m)\setminus n}$ here is a vector in $\GF(q^2)^{|\sN(m)|-1}$, rather than $\GF(2)^{|\sN(m)|-1}$.
Thus it needs simplification before we can reach an efficient computation.
The trick is to describe the likelihood of commutation and anti-commutation of $e_n$ and $H_{mn}$ by a Bernoulli (binary) random variable.
At qudit $n$, if $H_{mn}=\eta\ne 0$, then the required random variable would be $\langle e_n, \eta \rangle \in\{0,1\}$.

\begin{definition} \label{def:la}
For an LLR-type vector, such as $\Lambda_n$ in Def.~\ref{def:LLR}, 
we define a belief-quantization operator $\lambda_\eta:\RR^{q^2-1}\rightarrow \RR$  for $\eta\in { \GF(q^2)\setminus\{0\} }$ by
 \begin{equation} \label{eq:la}
 \begin{aligned}
 \lambda_{\eta}(\Lambda_n) 
 &= \ln\frac{ \sum_{\xi\in\GF(q^2): \langle \xi,\eta \rangle = 0} P(e_n=\xi) }{ \sum_{\xi\in\GF(q^2): \langle \xi,\eta \rangle = 1} P(e_n=\xi) }\\
 &= \ln\frac{ 1 + \sum_{i: \langle \zeta^i,\eta \rangle = 0} e^{-\Lambda_n^{(i)}} }{ \sum_{i: \langle \zeta^i,\eta \rangle = 1} e^{-\Lambda_n^{(i)}} }.
 \end{aligned}
 \end{equation}
\end{definition}
Note that 
\begin{equation} \label{eq:la_1}
\lambda_{\eta}(\Lambda_n) = \ln\frac{ P(\langle e_n, \eta \rangle = 0) }{ P(\langle e_n,\eta \rangle = 1) }, 
\end{equation}
which is the LLR of the binary random variable  $\langle e_n, \eta \rangle$
and this term features the major difference  between our algorithm and the classical nonbinary LLR-BP \cite{WSM04}.

Having the scalar information $\lambda_{H_{mn}}(\Lambda_n)$ for each edge $(m,n)$, where ${ H_{mn}\ne 0 }$, BP completes the update \eqref{eq:v_last} by the $\tanh$ rule mentioned in \cite{HOP96,KFL01}, \cite[Sec.~2.5.2]{RU08}, defined as follows.
	For two scalars $x,y\in\RR$,
	\begin{align}
	x \boxplus y  &\teq 2\tanh^{-1}\left( \tanh\frac{x}{2} \times \tanh\frac{y}{2} \right). \label{eq:bplus}
	\end{align}
	More generally, for $k$ scalars $x_1,\dots,x_k \in \RR$, 
	\begin{align}
	\overset{k}{\underset{n=1}{\boxplus}} x_n &\teq 2\tanh^{-1}\left( {\textstyle \prod_{n=1}^k} \tanh\frac{x_n}{2} \right). \label{eq:bsum} 
	\end{align}
Then the update \eqref{eq:v_last} can be computed by the $\tanh$ rule,
	\begin{equation} \label{eq:v_by_tanh}
	\Gamma_n^{(i)} = \Lambda_n^{(i)} + \sum_{m\in\sM(n) \atop  \langle \zeta^i, H_{mn} \rangle=1  } (-1)^{z_m} \left( \underset{n'\in\sN(m)\setminus n}{\boxplus} \lambda_{H_{mn'}}(\Lambda_{n'}) \right),
	\end{equation}
which completes the computation of the first iteration.
We show that \eqref{eq:v_last} and \eqref{eq:v_by_tanh} are equivalent in Appendix~\ref{sec:bp_rule}.

To iteratively update $\Gamma_n^{(i)}$, BP performs the message passing \cite{Pea88,WLK95} by exchanging two types of messages:
\emph{variable-to-check messages} $\lambda_{H_{mn}}(\Gamma_{n\to m})$ and \emph{check-to-variable messages} $\Delta_{m\to n}$, 
where $\Gamma_{n\to m}=\Lambda_n$ for the initialization.
The proposed LLR-BP decoding algorithm is shown in Algorithm~\ref{alg:LLR-BP}. 
Note that in the conventional BP, if $H$ is over $\GF(q^2)$, each message is a vector of length~$q^2$ \cite{DM98,MD01,DF07}, \cite{PC08,Wan+12,Bab+15},
but we need only scalar messages in our refined algorithm.
We remark that in \eqref{eq:vert} and \eqref{eq:hard} the summation is restricted to the anticommute part. 
	This is nontrivial and is different from the conventional approach.

	\begin{algorithm}
		\begin{flushleft}
			\caption{: LLR-BP for decoding quantum codes over $\GF(q=2^l)$ with binary syndrome} \label{alg:LLR-BP}

			{\bf Input}: 
			$H\in\GF(q^2)^{M\times N}$,  $z\in\{0,1\}^M$, $T_{\max}\in\mathbb{Z}_+$, and   LLR vectors $\{\Lambda_n\in\RR^{q^2-1}\}_{n=1}^N$.
			
			{\bf Initialization}: For $n=1$ to $N$ and $m\in\sM(n)$, let 
			$$\Gamma_{n\to m} = \Lambda_n$$
			\begin{itemize}
				\item[] and   compute $\lambda_{H_{mn}}(\Gamma_{n\to m})$.
			\end{itemize}

			{\bf Horizontal step}: For $m=1$ to $M$ and $n\in\sN(m)$, compute 
			\begin{equation} \label{eq:horiz}
			\Delta_{m\to n} = (-1)^{z_m} \underset{n'\in\sN(m)\setminus n}{\boxplus} \lambda_{H_{mn'}}(\Gamma_{n'\to m}).
			\end{equation}
			
			{\bf Vertical step}: For $n=1$ to $N$ and $m\in\sM(n)$, compute 
			\begin{equation} \label{eq:vert}
			\Gamma_{n\to m}^{(i)} = \Lambda_n^{(i)} + \sum_{m'\in\sM(n)\setminus m \atop \langle \zeta^i, H_{m'n} \rangle=1} \Delta_{m'\to n}, 
			~~\forall~~ i=0,\dots,q^2-2,
			\end{equation}
			\begin{itemize}
				\item[] and compute  $\lambda_{H_{mn}}(\Gamma_{n\to m})$.
			\end{itemize}

			{\bf Hard-decision step}: For $n=1$ to $N$, compute 
			\begin{equation} \label{eq:hard}
			\Gamma_n^{(i)} = \Lambda_n^{(i)} + \sum_{m\in\sM(n) \atop \langle \zeta^i, H_{mn} \rangle=1} \Delta_{m\to n}, ~~\forall~~ i=0,\dots,q^2-2.
			\end{equation}
			\begin{itemize}
				\item Let $\hat e = (\hat e_1, \dots, \hat e_N)$, where 
				$\hat e_n=0$, if $\Gamma_n^{(i)} > 0$ for $i=0,\dots,q^2-2$, and
				$\hat e_n=\argmin\limits_{\zeta^i\in\GF(q^2)} \Gamma_n^{(i)}$, otherwise.
			\end{itemize}
			
			The horizontal, vertical, and hard-decision steps are iterated until that 
				the inferred $\hat e$ has syndrome $z$ (i.e., $\langle \hat e, H\rangle = z$)
				or that the maximum number of iterations $T_{\max}$  is reached.

		\end{flushleft}
	\end{algorithm}

We explain the messages. 
First, write \eqref{eq:upd_n} as
\begin{align*}	
\langle e_n, H_{mn} \rangle = z_m + \textstyle \sum_{n'\in\sN(m)\setminus n} \langle e_{n'}, H_{mn'} \rangle.  
\end{align*}
This suggests that the information from a neighboring check~$m$,  
together with the syndrome bit $z_m$, will tell us the likelihood of the error $e_n$ commuting with $H_{mn}$ or not, which can be quantified by a scalar. 
Thus $\lambda_{H_{mn}}(\Gamma_{n\to m})$ is the (scalar) message that variable node $n$ has to send to check node $m$, where $\Gamma_{n\to m}$ initialized to $\Lambda_n$. 
Then check node~$m$ combines the incoming $\lambda_{H_{mn'}}(\Gamma_{n'\to m})$ and generates $\Delta_{m\to n}$ \eqref{eq:horiz}, which is the (scalar) message that check node $m$ has to send to variable node $n$.

Consequently,  variable node $n$ collects the messages $\Delta_{m\to n}$  for $m\in\sM(n)$, 
	together with the initial belief $\Lambda_n$, to update the running belief $\Gamma_{n}$ \eqref{eq:hard}. 
Updating $\Gamma_{n\to m}$ \eqref{eq:vert} can be simplified by, for each entry $i$, 
$\Gamma_{n\to m}^{(i)}=  \Gamma_{n}^{(i)}-\Delta_{m\to n}$, if $\langle\zeta^i,H_{mn}\rangle=1$, and 
$\Gamma_{n\to m}^{(i)}=  \Gamma_{n}^{(i)}$, otherwise.
Then the updated $\Gamma_{n\to m}$ is used to update the message $\lambda_{H_{mn}}(\Gamma_{n\to m})$, which will be sent to check node $m$ for the next iteration.
(The computations may be further simplified as in Remark~\ref{rmk:cmpx0}.)
The two types of message passing (\mbox{$m$-to-$n$} and \mbox{$n$-to-$m$}) are iterated until a stop criterion is achieved.

Note that the estimate of $\Gamma_{n}^{(i)}$ by BP is very good if the neighboring messages incoming to a node are nearly independent \cite[Sec~4.2]{Gal63} 
(e.g., when  the parity checks have small overlap).  
In particular, if the Tanner graph is a tree, BP is exact and converges quickly.
Quantum codes inevitably have correlated  {(dependent)} messages  {due to short cycles}.
We will discuss more in the section of simulations (Sec.~\ref{sec:sim}).

\subsection{Some remarks}

\begin{remark} \label{rmk:tbl}
	The computations in \eqref{eq:la}--\eqref{eq:bsum} can be efficiently computed by numerical or lookup-table methods~\cite{Gal63}, such as the Jacobian logarithm \cite{RVH95,HOP96,HEAD01,WSM04}.
		The function $\lambda_\eta(\cdot)$ can be computed by repeatedly using a function ${ f:\RR^2\to \RR }$ defined by
		$ f(x,y) \triangleq \ln(e^x+e^y) = \max(x,y) +  \ln(1+e^{-|x-y|}) $.
		Since  $\ln(1+e^{-|x-y|})\in [0, \ln(2)] \subset [0, 0.69315]$, it can be implemented by a lookup-table or any numerical methods.
		The multiplications can be avoided if the lookup-table method is used.
		As in \cite{HOP96}, $x\boxplus y = \ln\frac{1+e^{x+y}}{e^x+x^y}$ $= \ln(1+e^{x+y}) - \ln(e^x+x^y)$, which can also be computed by~$f$.
\end{remark}

\begin{remark} \label{rmk:eq}
	When $q=2$, the LLR-BP in Algorithm~\ref{alg:LLR-BP} is equivalent to the refined linear-domain BP for binary quantum codes \cite[Algorithm~3]{KL20}. 
	The linear-domain algorithm can also be extended to $q=2^{l}$. 
	A linear-domain algorithm is suitable for a decoder with fast floating-point multiplication.
\end{remark}

\begin{remark} \label{rmk:sch}
	Algorithm~\ref{alg:LLR-BP} is specified using a \emph{parallel schedule}. Other schedules can also be used (e.g.,  a \emph{serial schedule} is used in \cite[Algorithms~2 and~4]{KL20}). 
\end{remark}

\begin{remark} \label{rmk:cmpx0}
	The computations in Algorithm~\ref{alg:LLR-BP} may be simplified. First, compute \eqref{eq:hard} but do not compute \eqref{eq:vert}.
	Then it is not hard to show that \eqref{eq:vert} can be omitted and
	\begin{equation} \label{eq:simp}
	\lambda_{H_{mn}}(\Gamma_{n\to m}) = \lambda_{H_{mn}}(\Gamma_n) - \Delta_{m\to n}.
	\end{equation}	
	Similarly, to compute each $\Delta_{m\to n}$ in \eqref{eq:horiz}, one can first compute 
	$\Omega_m \teq \underset{n\in\sN(m)}{\boxplus} \lambda_{H_{mn}}(\Gamma_{n\to m})$ 
	and then compute 
	$\Delta_{m\to n} = \Omega_m \boxminus \lambda_{H_{mn}}(\Gamma_{n\to m})$,\footnote{
		This may be an approximation since, 
			before computing $\Omega_m$,
			a tiny disturbance needs to be introduced to $\lambda_{H_{mn}}(\Gamma_{n\to m})$ if it is zero.
		Then when using $x\boxminus y$, it can be guaranteed that $y\ne 0$ and $|x|<|y|$.
		This does not affect the performance in simulations if the disturbance is small enough. 
		}
	where 
	\[
	x \boxminus y  = 2\tanh^{-1}\left( \tanh\frac{x}{2} \,\,/\, \tanh\frac{y}{2} \right) 
	\]
	for $x,y\in\RR$ such that $y\ne 0$ and $|x|<|y|$.
\end{remark}

\begin{remark} \label{rmk:cmpx_c}
	It is known that using a conventional BP over $\GF(q^2)$ has a high check-node complexity $O(q^2\log q^2)$ per edge, which 
	dominates the \mbox{complexity of BP \cite{MD01,DF07,Bab+15}}. 
	Since Algorithm~\ref{alg:LLR-BP} uses scalar messages, the check-node complexity is $O(1)$ per edge, independent of $q$.
	Beside the low complexity, using scalar messages has advantages in performance as explained in the following.
\end{remark}

Since only scalar messages are exchanged in Algorithm~\ref{alg:LLR-BP}, it is straightforward to apply the techniques of \emph{message normalization} or \emph{message offset} to improve the performance of BP when the messages are overestimated \cite{CF02b,CDE+05,YHB04}. 
\begin{itemize}
	\item Message normalization by  $\alpha_c>0$: 
	the message $\Delta_{m\to n}$ in \eqref{eq:horiz} is normalized to $\Delta_{m\to n}/\alpha_c$ prior to  the subsequent computations of the algorithm.
	\item Message offset by   $\beta>0$: 
	the message $\Delta_{m\to n}$ in~\eqref{eq:horiz} is offset to $\sign(\Delta_{m\to n})\times \max(0,|\Delta_{m\to n}|-\beta)$  prior to  the subsequent computations of the algorithm.
\end{itemize}
Messages are overestimated  because  the overlap between two parity checks is not small, causing (unreliable) dependent messages passing in the Tanner graph.
Quantum stabilizer codes  inevitably have this issue and can suffer significant BP performance loss or have high error-floors \cite{MMM04,PC08,Wan+12,Bab+15,ROJ19,PK19,LP19,KL20,RWBC20}.

\begin{remark} \label{rmk:perf}
	In the linear-domain algorithm \cite{KL20},
	it needs to compute $(\cdot)^{1/\alpha}$ for message normalization or $(\cdot)\times 1/e^\beta$ for message offset. 
	Applying message normalization or offset in LLR-BP is much simpler as shown above.
\end{remark}

\begin{remark} \label{rmk:cmpx2}
	For the implementation cost,
	a practical concern   is that, compared to a linear-domain algorithm, 
	an LLR algorithm can be implemented with smaller bit-width  (e.g., each scalar is represented by the six most significant bits (MSBs) for decoding classical binary codes as suggested by Gallager \cite{Gal63}).
	This will be discussed more in the next section.
	Since $\lambda_\eta(\cdot)$ and the operation $\boxplus$ can be approximately computed using looking-up tables (Remark~\ref{rmk:tbl}),
	the LLR algorithm with message normalization or offset does not need multiplications.
	(If $\alpha_c$ is used, choose $1/\alpha_c$ to be a value with smaller bit-width and then the computation of $(\cdot)\times 1/\alpha_c$ only takes several bitwise-shifts and additions.)
\end{remark}

 \begin{remark} \label{rmk:add_vs_lin}
 	Suppose that  $H$ is constructed from $\tilde H$ by the  CSS extension (\ref{eq:css_w}).
 	If a binary syndrome $z\in\{0,1\}^M$ is obtained according to $H$,
 	a nonbinary syndrome $\tilde z\in\GF(2^{2l})^{\frac{M}{2l}}$ can be derived from $z\in\{0,1\}^M$.\footnote{
 		See \cite[Table~5]{Bab+15} for the case of $q=2$; this can be generalized to  $q=2^l$. 
 		}
 	Then the classical nonbinary~BP (referred to as \emph{CBP$_{q^2}$}) can decode the syndrome $\tilde z$ according to $\tilde H$. (See Appendix~\ref{sec:CBP}.)
	
	On the other hand, given a parity-check matrix $\tilde H\in \GF(2^l)^{M'\times N}$ of a classical linear code over $\GF(2^l)$ with short cycles, 
	$\tilde H$ can be extended as an additive check matrix $H\in \GF(2^l)^{lM'\times N}$ by a step like \eqref{eq:css_w}. 
	Then we can use a decoding strategy like Algorithm~\ref{alg:LLR-BP} (possibly with $\alpha_c$ or~$\beta$) to improve the BP performance on this classical code.
\end{remark}
 
 	Though $(H,z)$ and $(\tilde H, \tilde z)$ in Remark~\ref{rmk:add_vs_lin} contain the same amount of information,
 	the results of using Algorithm~\ref{alg:LLR-BP} on $(H,z)$ and using CBP$_{q^2}$ on $(\tilde H, \tilde z)$ would be different.
 	The major difference is that CBP$_{q^2}$ is more likely to have overestimated messages when there are short cycles.
 	This is a disadvantage of using CBP$_{q^2}$ even if the quantum code is obtained from a classical linear code.
 	For comparison, we provide a detailed discussion in Appendix~\ref{sec:CBP}  using Steane's code \cite{Steane96}.
 	In the comparison, Algorithm~\ref{alg:LLR-BP} indeed handles the  message-overestimate problem better.
	In the following simulations, we will focus on the performance of Algorithm~\ref{alg:LLR-BP}.

\section{Simulation results} \label{sec:sim}

Several advantages of using the  scalar-based LLR-BP algorithm are suggested in the above remarks.
In this section, we provide numerical results to show these advantages, focusing on how good performance can be achieved with low complexity.
{Algorithm~\ref{alg:LLR-BP} supports $q=2^l$, so it will be referred to as \mbox{\emph{LLR-BP$_{q^2}$}}, depending on the $q$ used.}
For $q=2$ \mbox{($q^2=4$)}, Algorithm~\ref{alg:LLR-BP} is referred to as \mbox{\emph{LLR-BP$_4$}}. 
Its linear-domain analogue \cite[Algorithm~3]{KL20} is referred to as \mbox{\emph{linear-BP$_4$}}.

In decoding, the number of short cycles is positively correlated with the (mean) \emph{row-density} of the check matrix,
$ \kappa = \frac{1}{M}\sum_{m=1}^M |H_m|/N $. 
We will consider codes with $\kappa$ from small to large. 
The first is a $[[129,28]]_2$ hypergraph-product (HP) code with $\kappa \approx 5.9/129 \approx 0.0457$, 
the second is a $[[256,32]]_2$ bicycle code with $\kappa = 16/256 = 0.0625$, 
and the third is a $[[126,28]]_2$ generalized bicycle (GB) code with $\kappa = 10/126 = 0.0794$.
Then we construct $[[256,32]]_{q=4}$ and $[[126,28]]_{q=4}$ codes by using \eqref{eq:css_w} and \eqref{eq:css_g} from the second and third codes, respectively.
Note that \eqref{eq:css_w} or \eqref{eq:css_g} will result in much more short cycles.

It is known that the BP performance on quantum codes may be improved by using a serial schedule \cite{KL20,RV20,KCL21}. 
A fully parallel implementation of message passing is preferred for faster decoding. 
Herein, we demonstrate \mbox{LLR-BP$_{q^2}$} for $q=2$ or $4$ with the parallel schedule and try to improve the decoding performance.

A scalar can be represented as a floating-point number $$(-1)^{b_0}\times 1.(b_1b_2\dots b_{k-1})_2 \times 2^{\rm exp}$$
as in the IEEE 754 standard \cite{IEEE754}, where $b_0,b_1,\dots,b_{k-1}\in\{0,1\}$ 
and $k$ is called the \emph{bit-width}.
Using a larger~$k$ increases the precision (for both additions or multiplications) but also increases the physical hardware area and computation time.
We use the  $[[129,28]]$ HP code to show that LLR-BP requires a smaller $k$ compared to linear-domain BP (see Sec.~\ref{sec:sim_129}).
This is consistent with the classical case, despite that the HP code has many short cycles.

Next, we describe the details of the simulations.
We will evaluate the performance of various decoding setups and consider the generalized depolarizing errors (Definition~\ref{def:ch}) on certain quantum stabilizer codes. 
Each initial LLR vector $\Lambda_n$ in $\{\Lambda_n\}_{n=1}^N$  is  set to 
	\begin{equation} \label{eq:init}
	\textstyle \Lambda_n = \left( \ln\frac{(1-\epsilon_0)}{\epsilon_0/(q^2-1)}, \dots,\ln\frac{(1-\epsilon_0)}{\epsilon_0/(q^2-1)} \right) \in \RR^{q^2-1},
	\end{equation}
where $\epsilon_0$ is chosen to be the depolarizing error rate $\epsilon$ or a certain constant independent of $\epsilon$. 
The reasons to choose $\epsilon_0$ a constant are related to the performance~\cite{HFI12}, as well as the complexity, and will be seen later.  
Using a constant $\epsilon_0$ also avoids the need to probe or estimate the channel statistic $\epsilon$.

For each simulation of a data point, we try to collect at least 100 logical errors, i.e., $\hat e - e \notin \row(H)$. 
Let $C = \row(H)$. 
A logical error occurs when the syndrome is falsely matched (${ \hat e - e \in C^\perp\setminus C }$) 
or $T_{\max}$ is reached but the syndrome is mismatched (${ \hat e - e \notin C^\perp }$).
We will try to minimize the maximum number of iterations $T_{\max}$ or match it to the literature.

For comparison, consider a \emph{generalized} bounded-distance decoding (BDD) with a lookup-table to have logical error rate
	\begin{equation} \label{eq:BDD}
	P_\text{e,BDD}(N,t,\gamma) = 1- \textstyle \left(\sum_{j=0}^t \gamma_j\binom{N}{j} \epsilon^j (1-\epsilon)^{N-j} \right),
	\end{equation}
where $t$ is the (error-)correction radius and $\gamma_j$ is the percentage of weight-$j$ errors to be corrected.
We  simply denote it by $P_\text{e,BDD}(N,t)$ if ${\gamma_j=100\%}$ for all $j\le t$.
Also, we  redraw the performance curves given in the literature. 
If it is based on independent $X$--$Z$ channel with a cross probability $\epsilon_b$, 
it will be rescaled according to a conversion rule in \cite{MMM04}:
	\begin{equation} \label{eq:b2e}
	\epsilon = 3\epsilon_b/2. 
	\end{equation}
Using this rule may slightly overestimate the performance in the depolarizing channel but it is acceptable.

In the following, we may omit $q$ when referring to an $ [[N,K,D]]_{q=2} $ code.
All tested codes are non-degenerate, and for degenerate codes the readers can refer to \cite{KL21}.

\subsection{$[[129,28,3]]$ hypergraph-product  code} \label{sec:sim_129}
First, we consider a $[[129,28,3]]$ HP code \cite{TZ14}, which is constructed with the $[7,4,3]$ and $[15,7,5]$ BCH codes, discussed in \cite{LP19,KL20,KCL21}.
The performance of BP decoding on this code  is bad  with the parallel schedule.
It can be improved by using a serial schedule \cite{KL20,KCL21} or using a neural-network based BP$_2$ (NBP) \cite{LP19}.
Herein, we show that it can be improved by using message offset with  low complexity, while keeping the parallel schedule.

\begin{figure} 
	\centering \includegraphics[width=0.52\textwidth]{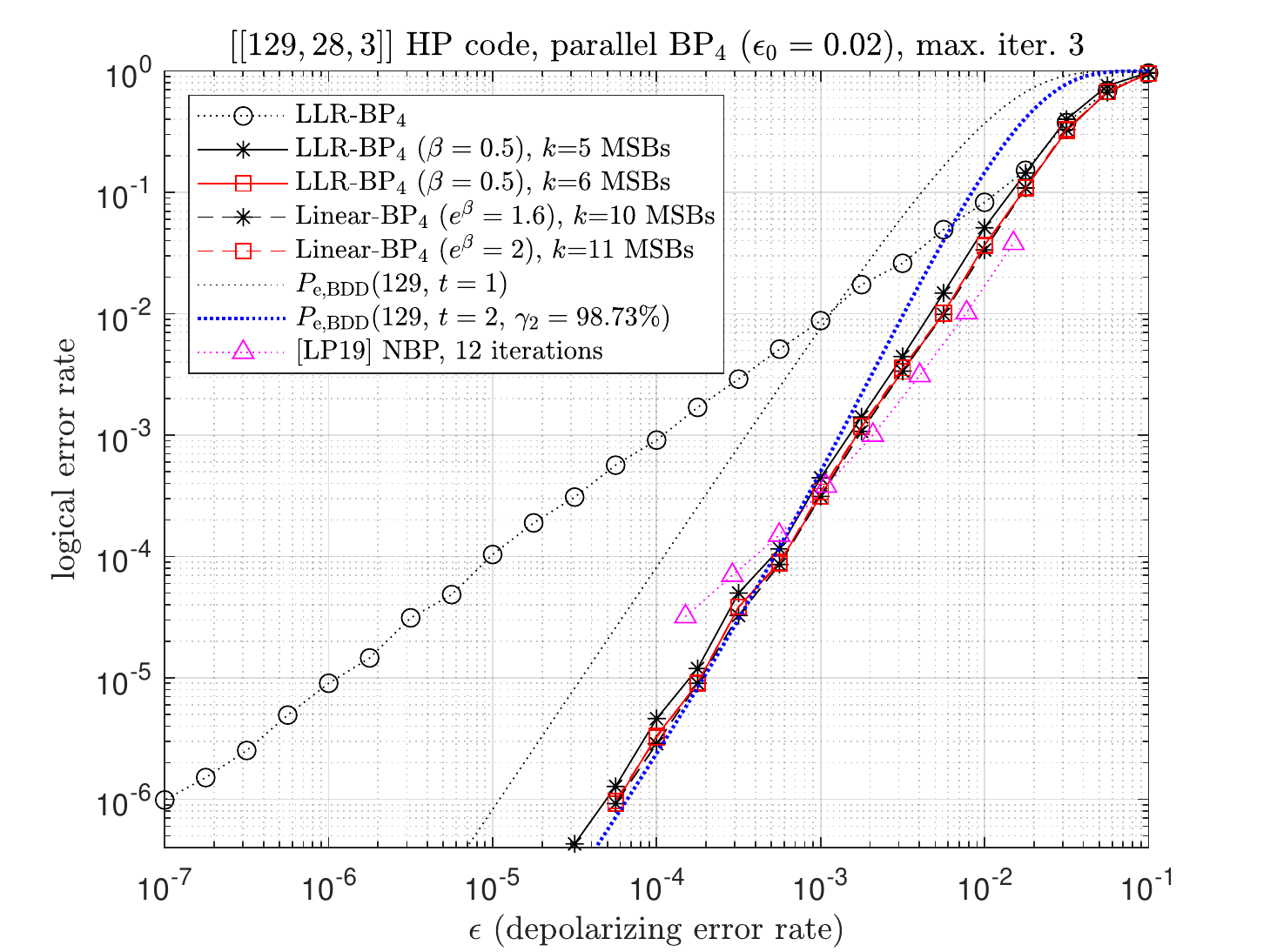}
	\caption{Performance curves of various decoders on the $[[129,28,3]]$ HP code. The [LP19] curve  is converted from \cite{LP19}, using \eqref{eq:b2e}.
	} \label{fig:hp}
\end{figure}

If a correction radius of $t=2$ is considered, this code is known to have $\gamma_0=1$, $\gamma_1=1$ and $\gamma_2\approx 98.73\%$ \cite{KCL21}.
We plot the performance curves of \mbox{LLR-BP$_4$}, \mbox{linear-BP$_4$}, NBP, and the lookup-table decoders $P_\text{e,BDD}(N,t=1)$ and $P_\text{e,BDD}(N,t=2,\gamma_2=98.73\%)$ in Fig.~\ref{fig:hp}. 
Due to the many short cycles in the Tanner graph, LLR-BP$_4$ does not perform well. 
However, after applying a message offset with $\beta=0.5$, its performance is close to the lookup-table decoder.
We also do simulations by truncating all messages to some bit-width $k$.
The required bit-width is about $k=5$ or $6$,  which is close to Gallager's expectation in Remark~\ref{rmk:cmpx2}, though sometimes larger bit-width may be needed due to the short cycles.

When initializing $\Lambda_n$ \eqref{eq:init}, we use a fixed ${\epsilon_0=0.02}$ (also truncated to $k$ MSBs). 
Otherwise,  if ${\epsilon_0=\epsilon}$, then $|\Lambda_n^{(i)}|$ becomes too-large when $\epsilon$ gets small: 
it will be insensitive to small $|\Delta_{m\to n}|$ when performing~\eqref{eq:vert} or~\eqref{eq:hard}, causing ineffective message passing.
In the simulations, it only needs a maximum number of iterations $T_{\max}=3$, so the complexity is very low.
(The bad performance of LLR-BP$_4$ without message offset is irrelevant to constant $\epsilon_0$ or $T_{\max}=3$.)
On the other hand, if linear-BP$_4$ is used, then it requires bit-width $k=11$ based on the choice of $e^\beta = 2$ for message offset, which can be implemented by bitwise-shift.
	In linear domain, one may also consider  $e^\beta = 1.6$: this only needs $k=10$ to achieve the same performance but implementing $(\cdot)\times 1/e^\beta \approx (\cdot)\times 0.625$ (Remark~\ref{rmk:perf}) needs a multiplication or two bitwise-shifts with one addition.

We choose an offset $\beta$ (which is like a threshold) and a constant $\epsilon_0$ \eqref{eq:init} 
by the following criteria, together with some pre-simulations like \cite[Fig.~11]{KL20}. 
We need $|\Delta_{m\to n}|$ large enough to pass the threshold~$\beta$.
The magnitude $|\Delta_{m\to n}|$ would increase with the number of iterations \cite{Gal63}.
Since we use a small ${T_{\max}=3}$ for this code, $\beta$ cannot be too large. 
(${\beta=0.5}$ here is comparatively small than ${\beta=2.75}$ for the other cases with larger $T_{\max}$ that will be discussed later.) 
Also, we need small enough $\epsilon_0$ to have large enough $|\Lambda_n|$, so that $|\Delta_{m\to n}|$ is large enough to pass the threshold $\beta$. 
Thus we have large $|\Lambda_n|$ but relatively smaller $|\Delta_{m\to n}|$ due to small ${T_{\max}=3}$.
Consequently, $k$ should be large enough as discussed in the last paragraph. 
(Using a $k$ smaller than the lower bounds suggested in Fig.~\ref{fig:hp} can cause large performance degradation, due to ineffective message passing.)

On the other hand, using $\alpha_c$ (which will be used later) does not require a fixed $\epsilon_0$ since there is no threshold effect of $\beta$.
There is a certain systematic way to choose the normalization value \cite[Sec.~III-B and Fig.~3]{KL21}.
To choose a better normalization value, some pre-simulations are useful \cite[Fig.~9]{KL20}.

Finally, we use \eqref{eq:b2e} to redraw the NBP performance curve given in \cite{LP19}. (The NBP is also efficient in run-time, but 12 iterations are needed.) 
The curve shows that using a trained neural-network decoder is able to have a better performance for large~$\epsilon$ but is hard to achieve a low error-floor for small~$\epsilon$.

\subsection{Binary quantum bicycle codes}

Second, we simulate bicycle codes \cite{MMM04} and GB codes \cite{KP13} as in \cite[Fig.~5]{PK19}.
We construct the $[[126,28,8]]$ GB code defined in \cite{PK19} (to be decoded with ${T_{\max}=32}$) and a $[[256,32]]$ random  bicycle code discussed in \cite{KL20} (to be decoded with ${T_{\max}=12}$).
The results are shown in Fig.~\ref{fig:bic}.

For reference, the curve denoted [PK19] is the result for the GB code given in \cite{PK19}, which is based on a layered (serial) schedule. 
We show that BP with parallel schedule works as well on this GB code. 
It can be further improved by using $\alpha_c$. (Using $\beta$ with proper $\epsilon_0$ also works.)
Gallager estimated that BP would have  performance close to ${ P_\text{e,BDD}(N,\,t\approx d) }$ \cite{Gal63}, 
much better than the typical BDD performance $ P_\text{e,BDD}(N,\,t\approx d/2) $. 
We draw the case $ P_\text{e,BDD}(N,\,t\approx d) $ for reference. 
The simulation results agree with Gallager's expectation.

\begin{figure} 
	\centering \includegraphics[width=0.52\textwidth]{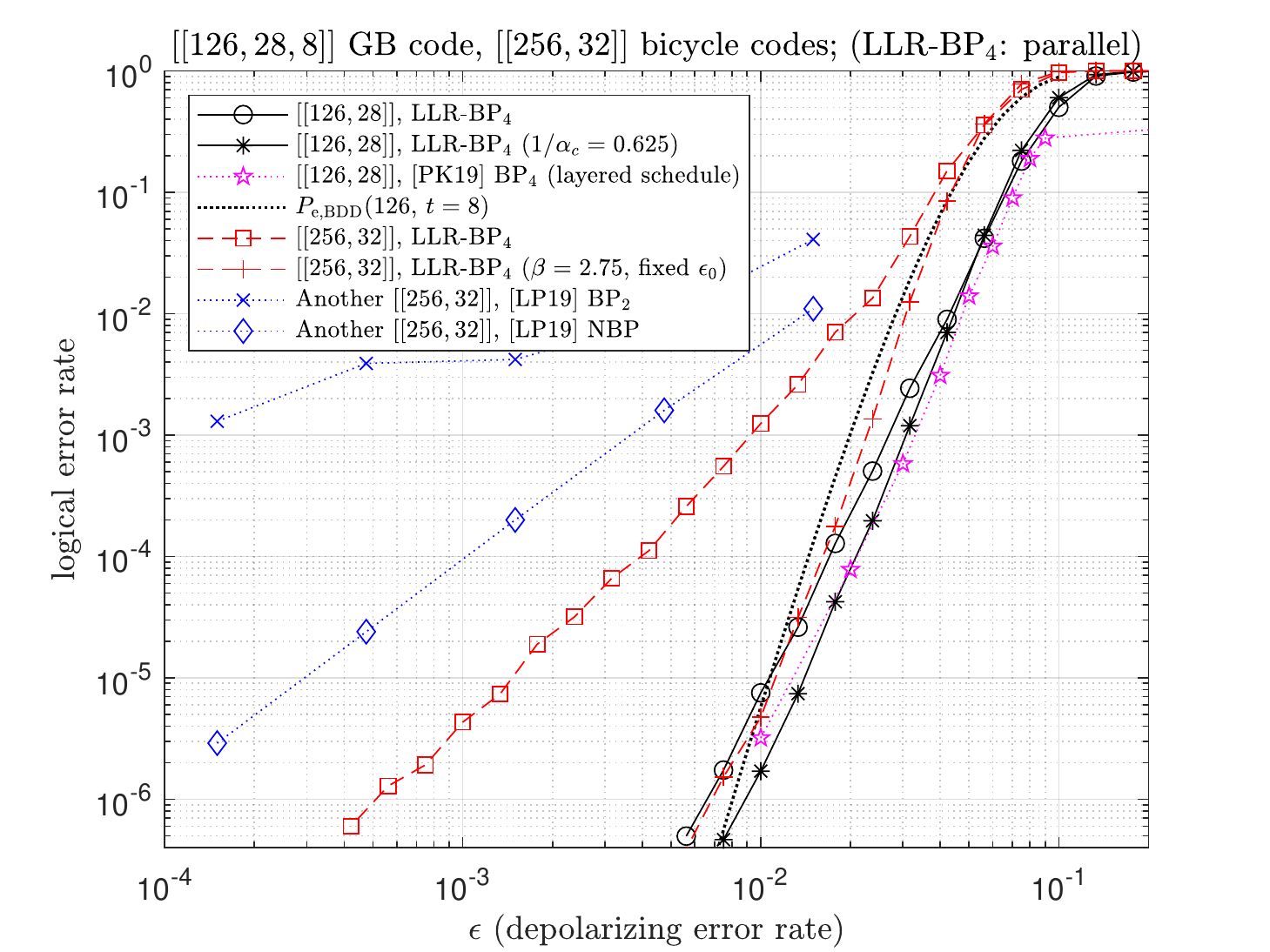}
	\caption{Performance curves of various decoders on the $[[126,28,8]]$ GB code and the $[[256,32]]$ random bicycle codes. 
		The [PK19] curve  is from \cite{PK19}.
		The [LP19] curves  are converted from \cite{LP19},  using \eqref{eq:b2e}.
	} \label{fig:bic}
\end{figure}

For the $[[256,32]]$ (random) bicycle code, it has a high error-floor \cite{KL20} (due to random construction~\cite{KL20b}). 
We show that using ${\beta=2.75}$ with ${\epsilon_0=5\times10^{-3}}$ significantly improves the error-floor. (Using $\alpha_c$ also works.)
The code's check matrix has a constant row-weight 16, and thus the code distance $D\le 16$, due to a row-deletion step in the bicycle construction \cite{MMM04}.
(Increasing the row-weight may lead to some unwanted side-effects in BP performance such as delayed waterfall roll-off\cite[Fig.~6]{MMM04}.)
If $D\approx 16$, this code may have BP performance approaching ${P_\text{e,BDD}(N,\,t=16)}$ by Gallager's expectation. 
Increasing $T_{\max}$ can achieve this performance.
For the convenience of discussion, we will show this for the case of $q=4$ in the next subsection.
(The qudit-wise error correction capability would be  at least the same as  the case of $q=2$ by using the CSS extension \eqref{eq:css_w}.)

The  performance curves (labeled [LP19] BP$_2$ and NBP) of another random $[[256,32]]$ bicycle code in \cite{LP19}
are  rescaled by \eqref{eq:b2e} and also plotted in Fig.~\ref{fig:bic} for comparison. 
Although NBP improves from BP$_2$ by several orders of magnitude, the resultant performance is not good enough because the original error-floor of [LP19] BP$_2$ is too-high.
From our experience, this is likely because the binary generator vector used for construction has many consecutive ones, causing too-many short cycles in the Tanner graph.
Using the random bicycle construction appropriately,\footnote{
	Our $[[256,32]]$ bicycle code is constructed by a binary generator vector with ones at bits 
	$1,     3,     9,    59,    68,    69,   107,   112$; 
	and in the row-deletion step, rows 
	$1,     2,    12,    59,    60,    68,    70,    73,    74,    76,    91,    92,   100,   115,   117,   120$ are deleted.
	For this code size, a purely-random construction is usually fine to achieve a good performance by proper message normalization or offset;
		it is more important to make sure that there are no more than three consecutive ones in the generator vector.
		The row-deletion  becomes tricky (to prevent  too irregular Tanner graphs) only when $N$ is large (e.g., $N>3000$) \cite{KL20b}.
	}
the curves we obtained have much lower error-floors.

\subsection{Nonbinary quantum bicycle codes}

Herein, we extend the previous $[[256,32]]$ bicycle code and $[[126,28]]$ GB code to $q=4$ by \eqref{eq:css_w} and \eqref{eq:css_g}, respectively.

The previous $[[256,32]]$ code has a check matrix $H = \left[ \begin{smallmatrix} \tilde H \\ \omega_0\tilde H \end{smallmatrix} \right]$,
where $\tilde H$ is a binary matrix such that $\tilde H \tilde H^T=O$ and $\omega_0$ is a primitive element of $\GF(4)$.
By the   CSS extension, we  have the following check matrix for a $[[256,32]]_{4}$ code
	$H = \left[ \begin{smallmatrix} \tilde H \\ \omega\tilde H \\ \omega^2\tilde H\\ \omega^3\tilde H \end{smallmatrix} \right]$,
where $\omega$ is a primitive element of $\GF(16)$.

\begin{figure} 
	\centering \includegraphics[width=0.52\textwidth]{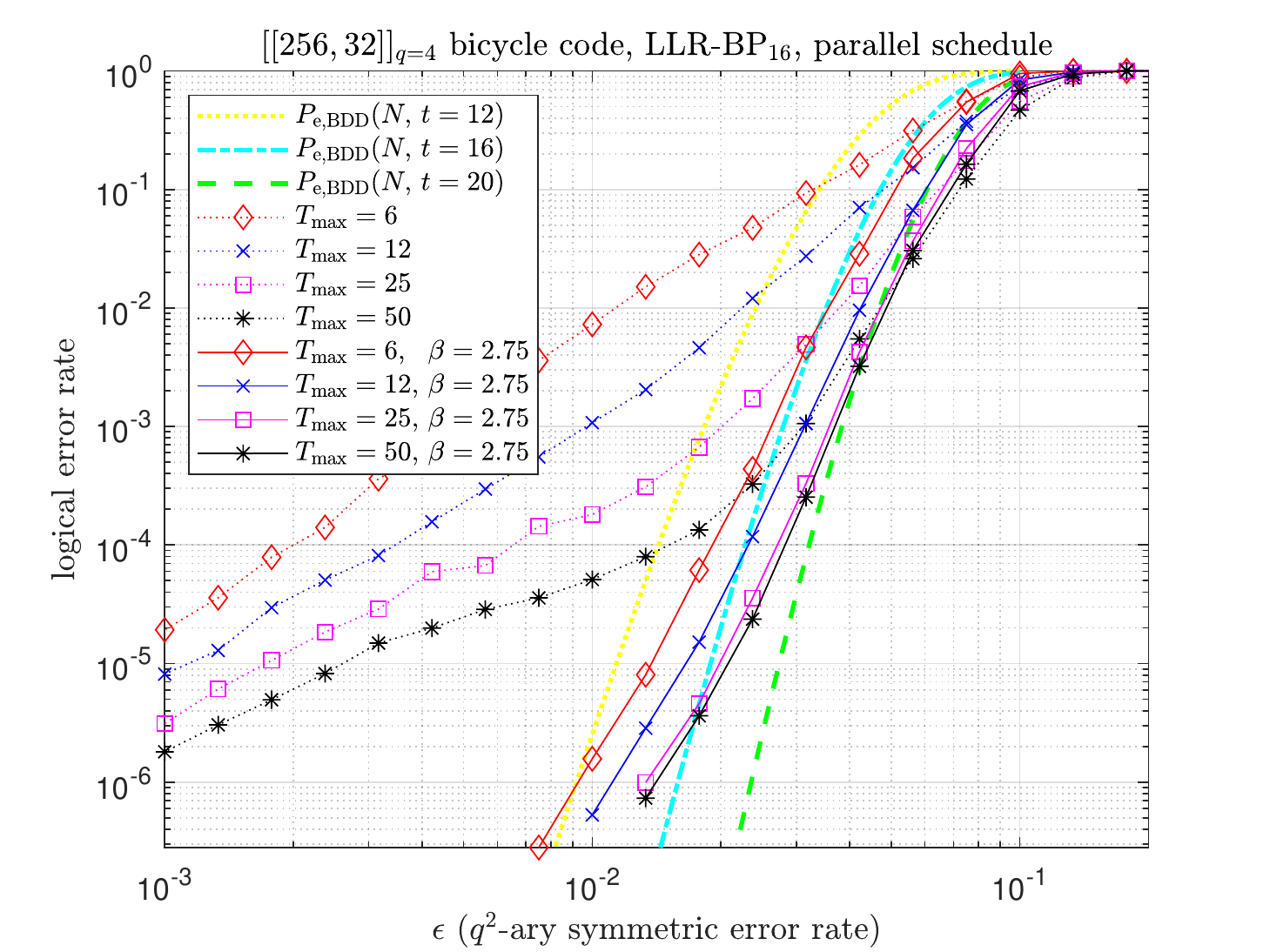}
	\caption{
	Performance of Algorithm~\ref{alg:LLR-BP} on the $[[256,32]]_{q=4}$ random bicycle code. $D$ is unknown but $\le 16$.
	When $\beta$ is applied, a fixed $\epsilon_0=5\times 10^{-3}$ is used for initializing $\Lambda_n$ \eqref{eq:init} 
	to prevent delayed waterfall roll-off.
	}\label{fig:256_q4}
\end{figure}

We perform the decoding of this  $[[256,32]]_{4}$ code in the \mbox{$q^2$-ary} symmetric channel (Def.~\ref{def:ch})
by Algorithm~\ref{alg:LLR-BP} (now LLR-BP$_{16}$). 
%
Different values of $T_{\max}$ are considered, and the results are shown in Fig.~\ref{fig:256_q4}. 
Using either $\alpha_c$ or $\beta$ improves the BP performance (to a similar level), and we show the case of using $\beta$.
Several BDD cases are also provided.
It can be seen that using $\beta$ significantly improves the error-floor performance even with small ${T_{\max}=6}$.
For ${T_{\max}=12}$, the performance trend is similar to the corresponding case in Fig.~\ref{fig:bic}.
By increasing $T_{\max}$ from 12 to 25, the performance improves by about  half an order of magnitude.
With message offset, there is no significant improvement if $T_{\max}$ is further increased to~50.
The CSS extension does not increase $D$, so the code still has $D\le 16$.
With enough $T_{\max}$, Algorithm~\ref{alg:LLR-BP} with message offset may have performance close to $P_\text{e,BDD}(N,\,t=16)$ 
at logical error rate around $10^{-5}$.

We remark that the high error-floor problem can be improved by carefully doing the row-deletion during the bicycle construction \cite{MMM04,KL20b}. 
However, as ${D\le 16}$, the performance limit is still about ${P_\text{e,BDD}(N,\,t=16)}$, and this is roughly achieved by using $\alpha_c$ or $\beta$ even with random construction.
Thus, Algorithm~\ref{alg:LLR-BP} (with proper message normalization or offset) has stable performance regardless of the code construction.
This provides more {code candidates in applications} and more flexibility when designing the stabilizer measurements (rows of $H$) for physical implementations.

\begin{figure} 
	\centering \includegraphics[width=0.52\textwidth]{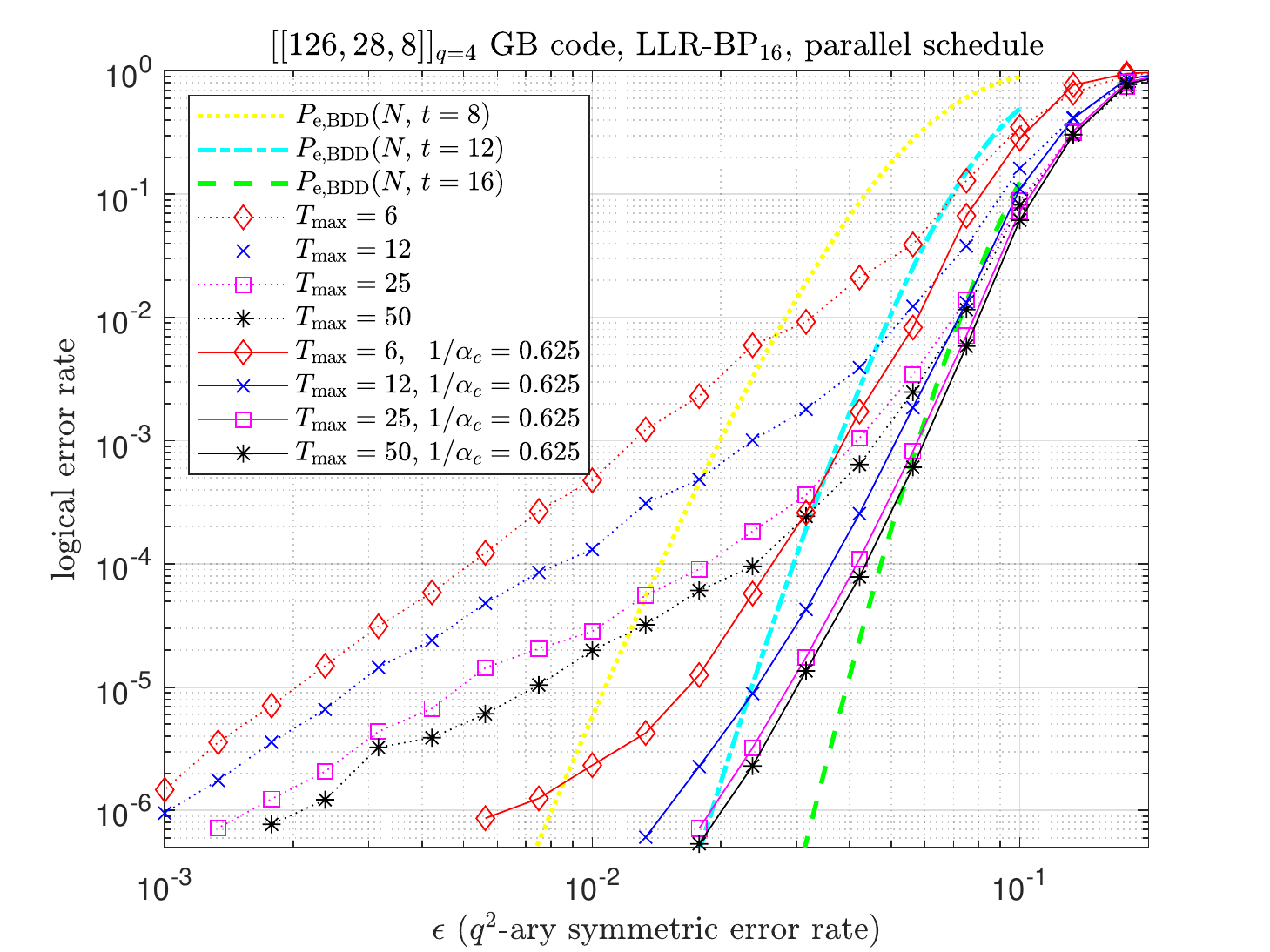}
	\caption{
	Performance of Algorithm~\ref{alg:LLR-BP} on the $[[126,28,8]]_{q=4}$ GB code.
	}\label{fig:126_q4}
\end{figure}

The previous $[[126,28,8]]$ GB code has a binary check matrix 
	${ H \equiv \left[\begin{smallmatrix} H_X&O\\ O&H_Z \end{smallmatrix}\right] }$     with ${H_X\ne H_Z}$, 
and we generalize it as a $[[126,28,8]]_{q=4}$ code by the generalized CSS extension~\eqref{eq:css_g}.
%
The decoding performance of \mbox{LLR-BP$_{16}$}  on this code is shown in Fig.~\ref{fig:126_q4}.
Compared to the corresponding case in Fig.~\ref{fig:bic} ($[[126,28]]$ curves therein), the raw BP does not perform well here. 
Using either $\alpha_c$ or $\beta$ improves the performance (to a similar level).
Here, we demonstrate the case of using $\alpha_c$.
The generalized CSS extension \eqref{eq:css_g} induces more short cycles, causing more overestimated messages and degrading the raw BP performance. 
Applying message normalization significantly improves BP even with small $T_{\max} = 6$.
For large $T_{\max}\ge 25$, using message normalization achieves much better performance close to \mbox{$P_\text{e,BDD}(N,\,t=12)$} at logical error rate $10^{-6}$, 
which means that most errors of weight $\le 12$ are corrected, despite that the minimum distance $D=8$ only.

\section{Conclusion and discussions} \label{sec:con}

We proposed an efficient scalar-based LLR-BP algorithm for decoding quantum codes over $\GF(2^l)$,
which extended our previous work \cite{KL20}.
This is especially useful when many qubits are grouped into a qudit and a nonbinary decoder is used;
our demonstrations   showed that correlated noise can be well handled by the nonbinary BP.

%
The check-node complexity in our scalar-based LLR-BP reduces from $O(2^{2l}\log 2^{2l})$ to $O(1)$ per edge.
This saves the decoding time and keep the quantum coherence better.
Moreover, the LLR algorithm can be implemented with additions and lookup-tables without multiplications. 
It also has other advantages, such as the smaller bit-width and the convenience for message normalization or offset.
Using message normalization or offset to improve the performance of decoding nonbinary quantum codes is very important since they may have more check rows and more short cycles.
Computer simulations were conducted, showing that good performances can be achieved by this low-complexity approach.
The algorithm has stable performance for various quantum codes and constructions. 
%

Our LLR method naturally extends to a joint decoding of data-syndrome errors by BP \cite{KCL21}.

Although our criterion of a successful decoding is $\hat e - e \in \row(H)$, $\hat e$ rarely converges to a degenerate error that is not $e$ when the decoding succeeds in the simulations of this paper. This is because the tested codes are non-degenerate, so BP converges in a way more like classical decoding.

Topological codes are highly degenerate and have stabilizers of low weight around~4 to 6 \cite{Kit03,BM06}
(causing the \emph{classical} minimum distance $\le$ 4 or 6 because the code is self-orthogonal).
Using the proposed BP algorithm on these degenerate codes 
has bad decoding performance.
However, further modification on BP to handle these degenerate codes is possible and this is addressed in \cite{KL21}.\footnote{
	The achieved threshold is roughly 16\% and 17.5\% for decoding surface codes and toric codes, respectively, as shown in \cite[Figs.~14 and~21]{KL21}.
	}

A quantum BCH/RS code \cite{GGB99,GB99,AKS07,LaGua14} has numerous short cycles in its Tanner graph since its parity-check matrix is of high density.
We tested two MDS Hermitian-orthogonal BCH codes $[[17,9,5]]_4$ and $[[17,13,3]]_4$ in \cite{LaGua14} and found that message normalization can improve the scalar-based BP by roughly an order of magnitude.
However, it is still far from any good BDD benchmark due to the many short cycles.
Pre-processing on the parity-check matrix \cite{LALL17,LALL20} or post-processing by ordered statistics decoding (OSD) \cite{JN06,EM06} improve BP on classical BCH/RS codes. 
Our decoder may also be improved through these techniques.

Our approach could be extended to quantum codes over $\GF(p^l)$ for a prime~$p$.
This would reduce the check-node complexity from $O(p^{2l}\log p^{2l})$ to $O(p\log p)$ per edge 
and such nonbinary BP still maintains the correlations between the $p^{2l}$ errors in the error basis.
%
The measurement (trace inner product) outcomes  would be in $\GF(p)$ as in Definition~\ref{def:Tr}.
Given nonzero $ \eta \in \GF(q^2=p^{2l}) $, Corollary~\ref{col:tr_half} should be generalized that the $q^2$ elements of $\GF(q^2)$ are partitioned into $p$ groups (each with $q^2/p$ elements). 
Each BP message is a probability vector over $\GF(p)$ in linear domain, where each entry is the sum of the probabilities of $q^2/p$ elements.
The check-node computation (convolution of two probability vectors over $\GF(p)$) can be done by FFT \cite{MD01,DF07}.

\appendices		

\section{:~ Basic variable-node and check-node update rules} \label{sec:bp_rule}

Herein we show that \eqref{eq:v_last} is equivalent to \eqref{eq:v_by_tanh} by using Def.~\ref{def:la} and  the $\tanh$ rule \eqref{eq:bplus}, \eqref{eq:bsum}. The derivation is similar to the BP$_2$ case \mbox{\cite[Sec.~2.5.2]{RU08}}. 
It suffices to focus on the update for $n=1$ with one row $H_m$. 
We assume that $\sN(m) = \{1,2,\dots,|H_m|\}$ without loss of generality and prove by induction on $|H_m|$.
The strategy is to go through the basic variable-node and check-node update  rules \mbox{\cite[Sec.~V-E]{KFL01}}.

If the row is with $|H_m|=1$, it is trivial that $\Gamma_{n}^{(i)} = \Lambda_{n}^{(i)}$.

If the row is with two nonzero entries, consider
	$$H_m|_{\sN(m)=\{1,2\}}=[\eta,\, \xi].$$ 
For $n=1$, \eqref{eq:v_last} becomes 
	\begin{equation} \label{eq:v_rule}
	\Gamma_1^{(i)} = \Lambda_1^{(i)} + \langle \zeta^i, \eta \rangle (-1)^{z_m} \lambda_{\xi}(\Lambda_{2})
	\end{equation}
by using \eqref{eq:la_1}. 
The result is identical to \eqref{eq:v_by_tanh} for $|H_m|=2$.
Equation~\eqref{eq:v_rule} is called the basic {variable-node update rule}.

Now suppose that the belief of $n=1$ is to be updated by one weight-three row $H_m$ with
\begin{equation*} 
H_m|_{\sN(m)=\{1,2,3\}}=[\eta,\, \xi,\, \xi'].
\end{equation*}
For $\zeta^i\in \GF(q^2)$ such that $\langle \zeta^i, \eta \rangle = 0$, both \eqref{eq:v_last} and \eqref{eq:v_by_tanh} have the same result:
	\begin{align*}
	~~~~\Gamma_1^{(i)} = \Lambda_1^{(i)} \quad \text{(i.e., no update is needed if $\langle \zeta^i, \eta \rangle = 0$).} 
	\end{align*}
For $\zeta^i\in \GF(q^2)$ such that $\langle \zeta^i, \eta \rangle = 1$, if $z_m=0$ in \eqref{eq:v_last},
	\begin{align*} \label{eq:comb}
	&\Gamma_1^{(i)} 
	= \Lambda_1^{(i)}+ \ln\tfrac{ \sum_{\tau,\tau':\langle \tau, \xi \rangle + \langle \tau', \xi' \rangle = 0} P(e_2 = \tau)P(e_3 = \tau') }
								{ \sum_{\tau,\tau':\langle \tau, \xi \rangle + \langle \tau', \xi' \rangle = 1} P(e_2 = \tau)P(e_3 = \tau') }\notag\\
	&= \Lambda_1^{(i)}+ {\ln}\tfrac{ P(\langle e_2,\xi \rangle=0) P(\langle e_3,\xi' \rangle=0) + P(\langle e_2,\xi \rangle=1) P(\langle e_3,\xi' \rangle=1) }
	{ P(\langle e_2,\xi  \rangle=0) P(\langle e_3,\xi' \rangle=1) + P(\langle e_2,\xi \rangle=1) P(\langle e_3,\xi' \rangle=0) },\notag\\
	&\quad \text{\small where the log term equals the LLR of $\langle e_2,\xi \rangle + \langle e_3,\xi' \rangle$ (mod 2),}\notag\\
	&= \Lambda_1^{(i)}+ {\ln}\tfrac{(P(\langle e_2,\xi \rangle=0)/P(\langle e_2,\xi \rangle=1))(P(\langle e_3,\xi' \rangle=0)/P(\langle e_3,\xi' \rangle=1)) + 1 }
	{ P(\langle e_2,\xi  \rangle=0)/P(\langle e_2,\xi \rangle=1) +  P(\langle e_3,\xi' \rangle=0)/P(\langle e_3,\xi' \rangle=1) }\notag\\
	&= \Lambda_1^{(i)} + {\ln}\tfrac{ e^x e^y + 1 }{ e^x + e^y }, \quad \text{where} ~~  x=\lambda_{\xi}(\Lambda_2),\ y=\lambda_{\xi'}(\Lambda_3),\notag\\
	&= \Lambda_1^{(i)} + {\ln}\tfrac{ (e^x+1)(e^y+1) + (e^x-1)(e^y-1) }{ (e^x+1)(e^y+1) - (e^x-1)(e^y-1) }\notag\\
	&= \Lambda_1^{(i)} + 2\tanh^{-1}\left(\tfrac{e^x-1}{e^x+1} \times \tfrac{e^y-1}{e^y+1}\right)\notag\\
	&= \Lambda_1^{(i)} + ( \lambda_{\xi}(\Lambda_2) \boxplus \lambda_{\xi'}(\Lambda_3) ) \quad \text{by   \eqref{eq:bplus}}.
	\end{align*}
In general, we have for  $\langle \zeta^i, \eta \rangle \in\{0,1\}$  and   $z_m\in\{0,1\}$, 
	\begin{equation} \label{eq:c_rule_1}
	\Gamma_1^{(i)} = \Lambda_1^{(i)} + \langle \zeta^i, \eta \rangle (-1)^{z_m} ( \lambda_{\xi}(\Lambda_2) \boxplus \lambda_{\xi'}(\Lambda_3) ),
	\end{equation}
which is identical to \eqref{eq:v_by_tanh} for $|H_m|=3$. 
The combining $(\lambda_{\xi}(\Lambda_2), \lambda_{\xi'}(\Lambda_3)) \mapsto \lambda_{\xi}(\Lambda_2) \boxplus \lambda_{\xi'}(\Lambda_3)$ is called the basic check-node update rule.

Now assume that the belief of $n=1$ is to be updated by one weight-four row $H_m$ with $H_m|_{\sN(m)}=[\eta,\, \xi,\, \xi',\, \xi'']$. 
Consider $\langle e_2,\xi \rangle + \langle e_3,\xi' \rangle$ (mod~2) as a binary random variable;
then its LLR is $\lambda_{\xi}(\Lambda_2) \boxplus \lambda_{\xi'}(\Lambda_3)$ by the above derivation. 
We can further consider a binary random variable 
$\left(\langle e_2,\xi \rangle + \langle e_3,\xi' \rangle\right) + \langle e_4,\xi'' \rangle$ (mod~2); then its LLR is 
$\left( \lambda_{\xi}(\Lambda_2) \boxplus \lambda_{\xi'}(\Lambda_3) \right) \boxplus \lambda_{\xi''}(\Lambda_4)$ by the same reason.
Thus, similar to \eqref{eq:c_rule_1}, 
\begin{align*}
\Gamma_{1}^{(i)} &= \Lambda_{1}^{(i)} + \langle \zeta^i, \eta \rangle (-1)^{z_m} \left( \lambda_{\xi}(\Lambda_2) \boxplus \lambda_{\xi'}(\Lambda_3) \right) \boxplus \lambda_{\xi''}(\Lambda_4) \\
&= \Lambda_{n=1}^{(i)} + \langle \zeta^i, H_{mn} \rangle (-1)^{z_m} \overset{4}{\underset{n'=2}{\boxplus}} \lambda_{H_{mn'}}(\Lambda_{n'}) \text{~ by \eqref{eq:bsum},}
\end{align*}
i.e., \eqref{eq:v_last} and \eqref{eq:v_by_tanh} are equivalent for $|H_m|=4$.
By induction on $|H_m|$ using the same trick and by joining more rows, we have that \eqref{eq:v_last} and \eqref{eq:v_by_tanh} are equivalent.

\newcommand*\refLLRBP{\ref{alg:LLR-BP} }	
\section{:~ Comparison of Algorithm~\protect\refLLRBP and the classical nonbinary BP} \label{sec:CBP}	

Conventionally one would like to use classical nonbinary BP to decode quantum codes since they can be considered as special nonbinary codes.
We will compare this direct use of classical nonbinary BP with our LLR-BP in~Algorithm~\ref{alg:LLR-BP}.

We first describe the decoding of binary quantum codes ${ (q=2) }$. 
Suppose that ${ H\in\GF(4)^{M\times N} }$ and ${ z\in\{0,1\}^M }$ are given.
For convenience, we describe the BP  in linear domain with  
initial beliefs ${ \{p_n=(p_n^I,p_n^X,p_n^Y,p_n^Z)\in\RR^4\}_{n=1}^{N} }$ and 
running beliefs ${ \{q_n=(q_n^I,q_n^X,q_n^Y,q_n^Z)\in\RR^4\}_{n=1}^{N} }$.
Also, the variable-to-check and check-to-variable messages are denoted by 
${ \{q_{nm} = q_{n\to m}\in\RR^4\}_{(m,n):H_{mn}\ne0} }$ and 
${ \{r_{mn} = r_{m\to n}\in\RR^4\}_{(m,n):H_{mn}\ne0} }$, respectively.
Initially,  $q_{nm} = p_n$ for each edge $(m,n)$. 
The following methods are considered in \cite{Wan+12,Bab+15}.

\begin{itemize}
\item[1.] \cite[(13)--(16)]{Wan+12} or \cite[(31)--(34)]{Bab+15}: 
	If the binary quantum code corresponds to an additive code over $\GF(4)$,  {convert the syndrome $z\in\{0,1\}^M$ to a syndrome $\tilde z\in\GF(4)^M$} such that each $z_m=\tr(\tilde{z}_m)$. 
	Thus $\{r_{mn}\}$ can be generated from $\{q_{nm}\}$ using the classical nonbinary BP. 
	Suppose that each ${ r_{mn}=(r_{mn}^{I}, r_{mn}^{X}, r_{mn}^{Y}, r_{mn}^{Z}) }$ is generated, where ${ \sum_W r_{mn}^W = 1 }$.
	According to the properties in Example~\ref{ex:q=2}, do a \emph{post average} as specified in \cite{Wan+12,Bab+15}. 
	For example, if $z_m=0$ and the edge $(m,n)$ is of type~$X$, then the post average rescales the vector as 
	$$\textstyle \tilde r_{mn}=(\frac{r_{mn}^{I}+r_{mn}^{X}}{2}, \frac{r_{mn}^{I}+r_{mn}^{X}}{2}, \frac{r_{mn}^{Y}+r_{mn}^{Z}}{2}, \frac{r_{mn}^{Y}+r_{mn}^{Z}}{2}).$$
	Then $q_n^W = a_n p_n^W \prod_{m\in\sM(n)} \tilde r_{mn}^W$ for all $W$ and $n$, where $a_n$ is a scalar such that $\sum_W q_n^W=1$.
\item[2.] \cite[(44)--(47)]{Bab+15} (only for linear codes): 
	If the binary quantum code corresponds to a linear classical code over $\GF(4)$, 
	assume that the given ${ H=\left[ \tilde H \atop \omega \tilde H \right] }$, 
	where ${ \tilde H \in \GF(4)^{\frac{M}{2}\times N} }$ is a parity-check matrix of the linear code.  
	Then one can treat $M$ syndrome bits ${ \{z_m\in\{0,1\}\}_{m=1}^M }$ as $M/2$ quaternary digits ${ \{\tilde z_m\in\GF(4)\}_{m=1}^{M/2} }$,
	and $\tilde z\in\GF(4)^{M/2}$ is regarded as the syndrome generated by the unknown error and the parity-check matrix $\tilde H$. 
	Using the classical nonbinary BP can decode the syndrome. Note that there is no post average on $r_{mn}=(r_{mn}^{I}, r_{mn}^{X}, r_{mn}^{Y}, r_{mn}^{Z})$, 
	and $q_n^W$ is updated by $q_n^W = a_n p_n^W \prod_{m\in\sM(n)} r_{mn}^W$ for all $W$~and~$n$, where $a_n$ is a scalar such that $\sum_W q_n^W=1$.
\end{itemize}

Both methods can be generalized to ${q=2^l}$, and the check-node complexity is ${O(2^{2l}\log 2^{2l})}$ per edge. 
By a derivation like \eqref{eq:sum_all}--\eqref{eq:v_by_tanh}, it can be shown that method~1, after generalized, is equivalent to Algorithm~\ref{alg:LLR-BP} in the sense that they have the same decoding output if no message normalization or offset is considered. 
However, Algorithm~\ref{alg:LLR-BP} has a check-node complexity $O(1)$ per edge and is more efficient.

If the code is linear, a generalized method~2 for $q=2^l$ is also efficient since the number of checks becomes $\frac{M}{2l}$.
However, method~2 tends to have more biased messages $r_{mn}$, while method~1 and Algorithm~\ref{alg:LLR-BP} tend to have more fair messages $\tilde r_{mn}$. 
Consequently, method~2 may   have overestimated messages. 
For comparison, we generalize method~2 for ${ q=2^l }$.
Let CBP$_{q^2}$ be the classical nonbinary BP (such as \cite{WSM04})  
with input: 
${ \tilde H\in\GF(q^2)^{\frac{M}{2l}\times N} }$,
${ \tilde z\in\GF(q^2)^\frac{M}{2l} }$, 
$T_{\max}\in\ZZ_+$, 
and ${ \{\Lambda_n\in\RR^{q^2-1}\}_{n=1}^N }$;
and with output $\hat{e}\in\GF(q^2)^N$.
%
%
Then we can impose on $\text{CBP}_{q^2}$ with method~2 to solve the decoding problem, as in Algorithm~\ref{alg:CBP}.\footnote{
	$\text{CBP}_{q^2}$ therein is assumed with Hermitian inner product. If it is with Euclidean inner product, simply do conjugate $[\tilde H_{mn}^q]$ before using CBP$_{q^2}$.
	}

	\begin{algorithm}
		\begin{flushleft}
			\caption{: A decoding method for quantum codes over $\GF(q)$ that correspond to classical linear codes over $\GF(q^2)$} \label{alg:CBP}
			
			{\bf Input}:
			$H\in\GF(q^2)^{{M}\times N}$ (where $q=2^l$), 
			$z\in\{0,1\}^{M}$, $T_{\max}\in\mathbb{Z}_+$, 
				$\{\Lambda_n\in\RR^{q^2-1}\}_{n=1}^N$, 
			and a BP oracle $\text{CBP}_{q^2}$.
			
			{\bf Initialization}: 
			\begin{itemize}
				\item 	Derive $\tilde H\in\GF(q^2)^{\frac{M}{2l}\times N}$ from $H$. 
				\item Convert $z$ to $\tilde z\in\GF(q^2)^{\frac{M}{2l}}$ according to the relation between $H$ and $\tilde H$.
			\end{itemize}
		
			\mbox{{\bf Decoding}: run $\hat e = \text{CBP}_{q^2}(\tilde H,\, \tilde z,\, T_{\max},\, \{\Lambda_n\}_{n=1}^N)$.}

		\end{flushleft}
	\end{algorithm}

	The complexity of each algorithm (or method) is as follows.  
	Let $|H|$ be the number of nonzero entries in $H$.
	The variable-node complexity is $ O(q^2) $ per~edge for each algorithm.
	Consider the computation flow as in Remark~\ref{rmk:cmpx0}.
		\begin{itemize}
		\item 
		Algorithm~\ref{alg:LLR-BP}  has complexity $ |H|O(q^2) + |H|O(1) = O(|H|q^2) .$
		\item
		Method~1 (after generalized for nonbinary quantum codes) 
		A generalized method~1
		has complexity $ |H|O(q^2) + |H|O(q^2\log q^2) = O(|H|q^2\log q^2) .$
		\item
		Algorithm~\ref{alg:CBP} has $ |\tilde H| = |H|/2l $, where ${ 2l = \log_2(q^2) }$. 
			Thus, Algorithm~\ref{alg:CBP} 
			has complexity $ |\tilde H|O(q^2) + |\tilde H|O(q^2\log_2(q^2)) = O(|H|q^2) $
			but is only applicable to linear codes.
		\end{itemize}

In \cite{Bab+15}, method~1 is called \emph{standard nonbinary BP} (though with additional post average) 
and method~2 is called \emph{modified nonbinary BP} (though it is  the check matrix  that is modified, not CBP).
The authors in \cite{Bab+15} recommended to use method~2 for linear codes because of fewer short cycles in $\tilde H$.
However, we take issue on method~2 since it may have many overestimated messages to perform less well as shown as follows.

\newcommand*\refCBP{\ref{alg:CBP} }																	
\newcommand*\SEVENq{$[[7,1,3]]_q$ }																	
\subsection*{Comparing Algorithms~\protect\refLLRBP and~\protect\refCBP by using \protect\SEVENq}	

Steane's code \cite{Steane96} is a $[[7,1,3]]_2$ code with a check matrix  $H\in\GF(4)^{6\times 7}$ extended from a classical Hamming code with a binary parity-check matrix $\tilde H\in\GF(2)^{3\times 7}$ by \eqref{eq:css_w}.

We consider to generalize Steane's code by writing ${ \tilde H = \tilde H^{(r)} }$ to mean that the (binary) $\tilde H$ has $r$ rows (checks); 
	then the corresponding ($q^2$-ary) $H = H^{(r)}$ has $2lr$ rows for $q=2^l$ as in \eqref{eq:css_w}.
The (binary) $\tilde H^{(3)}$ can be cyclicly generated by the vector $(1011100)$ with three rows as
	$$ \tilde H^{(3)} = 
	\left[ \begin{smallmatrix}
	1 & 0 & 1 & 1 & 1 & 0 & 0 \\
	0 & 1 & 0 & 1 & 1 & 1 & 0 \\
	0 & 0 & 1 & 0 & 1 & 1 & 1
	\end{smallmatrix}\right],
	$$
which has an irregular Tanner graph.
If we cyclicly generate seven rows, then the resultant (binary) $\tilde H^{(7)}$ has a regular Tanner graph, 
which provides the same error-correction capability for each qudit. 
BP on $\tilde H^{(7)}$ still has overestimated messages but the incorrect dependency may cancel out.

First, consider the case of $q=2$. 
The ($q^2$-ary) $H^{(3)}$ and $H^{(7)}$ have $6$ and $14$ rows, respectively (i.e., Algorithm~\ref{alg:LLR-BP} on $H^{(r)}$ 
encounters more short cycles than CBP$_{q^2}$ on $\tilde H^{(r)}$).
The decoding results of Algorithms~\ref{alg:LLR-BP} and~\ref{alg:CBP} are shown in Fig.~\ref{fig:713_2}.
If the input matrix has $(r=3)$, both algorithms do not perform well since the Tanner graph is irregular.
	%
	For a particular error $e=(0 0 0 0 \omega^2 0 0)$, both decoders converge to a large-weight $\hat e = (0 0 \omega^2 \omega^2 \omega^2 \omega^2 0)$ due to overestimated messages.
	This is improved if the decoding is based on $(r=7)$. 
In \cite{Bab+15}, Babar {\it et~al.} considered a fixed initialization $\epsilon_0=0.26$ for \eqref{eq:init}. 
We adopt a similar value $\epsilon_0\approx 0.24$ and it indeed improves for the case of $(r=3)$ when $q=2$, as also shown in Fig.~\ref{fig:713_2}.

\begin{figure} 
	\centering \includegraphics[width=0.54\textwidth]{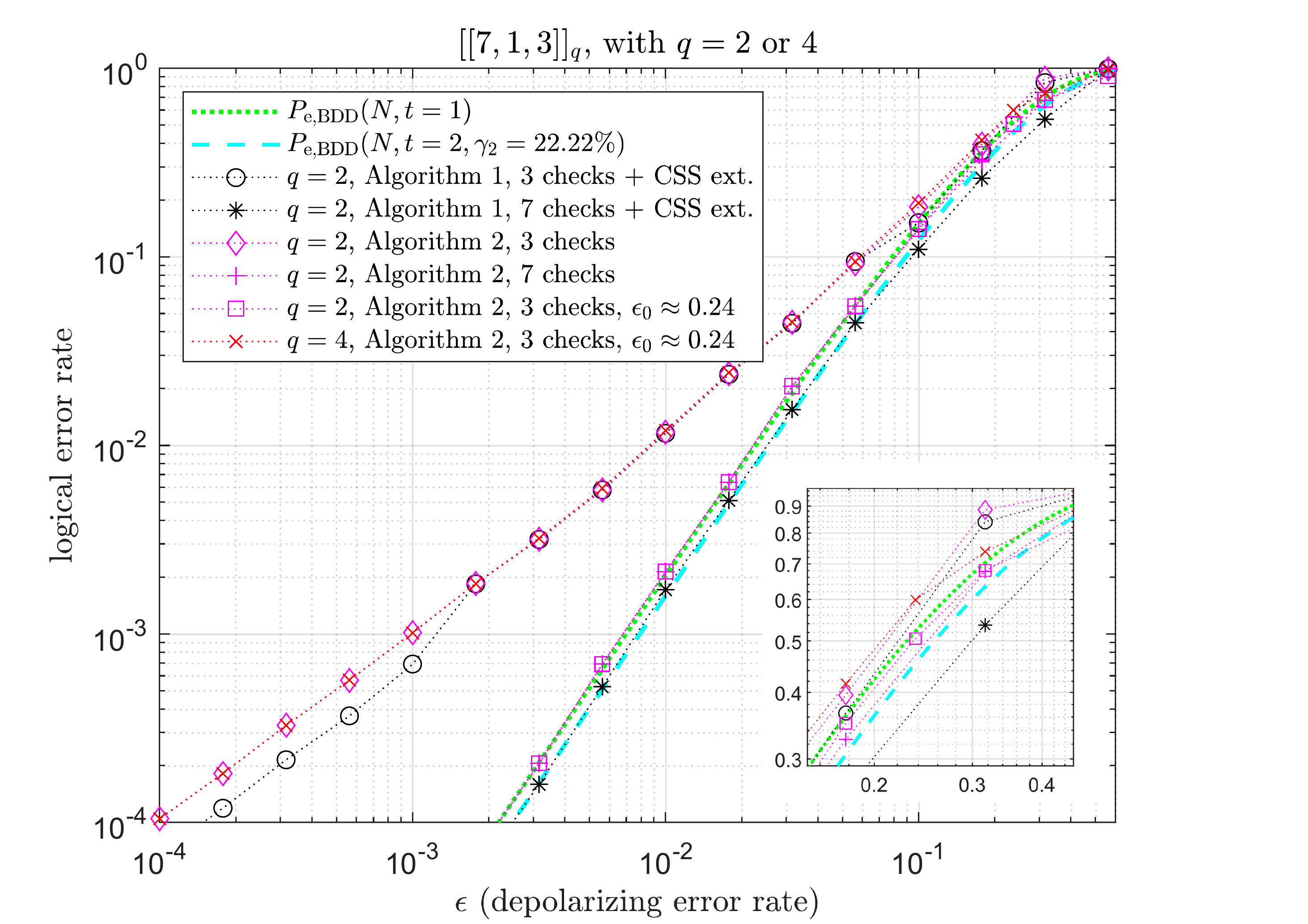}
	\caption{
		Decoding $[[7,1,3]]_q$ codes for $q=2$ and $q=4$ with a maximum number of 10 iterations, over the channel in Def.~\ref{def:ch}.
		When $q=4$, the curves are similar to those of $q=2$, except for one case as shown in the figure.
	} \label{fig:713_2}
\end{figure}

\begin{figure}
	\centering \includegraphics[width=0.48\textwidth]{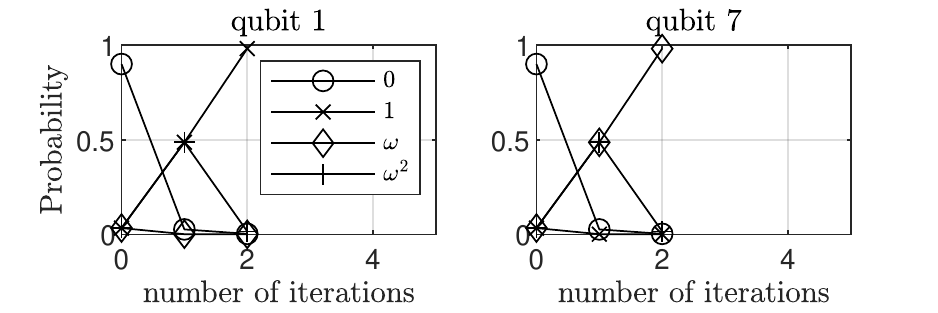}
	\caption{Decoding on error $(1 0 0 0 0 0 \omega)$ by Algorithm~\ref{alg:LLR-BP} with $q=2$: successful.
	} \label{fig:713_LlrBP}	\vspace*{\floatsep}
	\centering \includegraphics[width=0.48\textwidth]{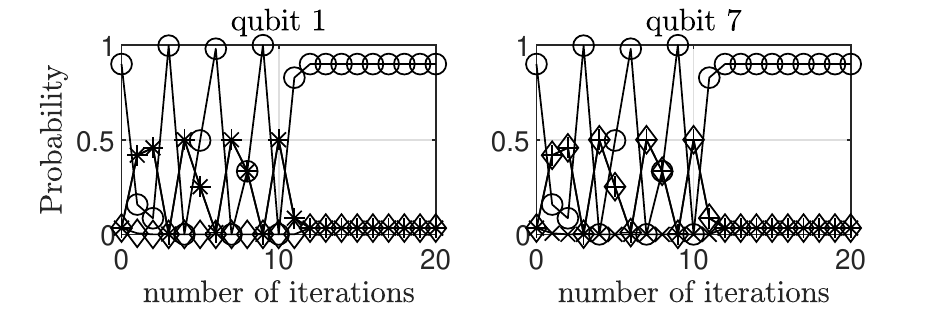}
	\caption{Decoding on error $(1 0 0 0 0 0 \omega)$ by Algorithm~\ref{alg:CBP} with $q=2$: failed.
	The convention of labels is the same as that in Fig.~\ref{fig:713_LlrBP}.
	The decoder runs to a large-weight state $(111111\omega)$ at iteration 10 after hard-decision, and will be trapped around $(0000000)$ after iteration 10.
	} \label{fig:713_ClsBP}	
\end{figure}

Note that, when ${q=2}$, the code has $2^{6}$ different error syndromes, corresponding to the zero vector (no error), \mbox{$21$ weight-one} errors, and \mbox{$42$ weight-two} errors. Thus $\gamma_2 = \frac{42}{{7\choose 2}\times 3^2} \approx 22.22\%$ in~\eqref{eq:BDD}.
Algorithm~\ref{alg:LLR-BP} is able to achieve this (optimum) correction capability, while Algorithm~\ref{alg:CBP} cannot.
This is indicated by two BDD curves in Fig.~\ref{fig:713_2}.
Note that for large $\epsilon$, Algorithm~\ref{alg:LLR-BP} with $(r=7)$ has performance better than $P_\text{e,BDD}(N,t=2,\gamma_2=22.22\%)$ 
because of degeneracy.\footnote{
	Given a specific syndrome, the decoder may always output the same low-weight error, but for large $\epsilon$, many high-weight degenerate errors occur with high probabilities and will be counted as decoding success.}

For a specific weight-two error $(1 0 0 0 0 0 \omega)$, we plot the decoding output probabilities based on $(r=7)$ and $\epsilon_0=0.1$ 
in Figs.~\ref{fig:713_LlrBP} and~\ref{fig:713_ClsBP} for Algorithms~\ref{alg:LLR-BP} and~\ref{alg:CBP}, respectively. 
Algorithm~\ref{alg:LLR-BP} successfully converges, while Algorithm~\ref{alg:CBP} is trapped around the zero vector.
(Algorithm~\ref{alg:CBP} runs into a large-weight error before trapped, due to overestimated messages.)

Next, consider the case of $q=4$. 
The (binary) $\tilde H^{(r)}$ still has $r$ rows, but the ($q^2$-ary) $H^{(r)}$ has $4r$ rows
and has much more short cycles compared to the case of $q=2$ ($2r$ rows).
However, the results are similar except that Algorithm~\ref{alg:CBP} with $(r=3)$ and $\epsilon_0 \approx 0.24$ does not perform well.
We plot this case in Fig.~\ref{fig:713_2}.
(We scan many values of $\epsilon_0$, including the value 0.26 used in \cite{Bab+15}, and also try $\epsilon_0=\epsilon$, 
but none of the values provide the improvement like the case of $q=2$.)




\ifCLASSOPTIONcaptionsoff		
\newpage						
\fi								


\begin{thebibliography}{10}
\providecommand{\url}[1]{#1}
\csname url@samestyle\endcsname
\providecommand{\newblock}{\relax}
\providecommand{\bibinfo}[2]{#2}
\providecommand{\BIBentrySTDinterwordspacing}{\spaceskip=0pt\relax}
\providecommand{\BIBentryALTinterwordstretchfactor}{4}
\providecommand{\BIBentryALTinterwordspacing}{\spaceskip=\fontdimen2\font plus
\BIBentryALTinterwordstretchfactor\fontdimen3\font minus
  \fontdimen4\font\relax}
\providecommand{\BIBforeignlanguage}[2]{{%
\expandafter\ifx\csname l@#1\endcsname\relax
\typeout{** WARNING: IEEEtran.bst: No hyphenation pattern has been}%
\typeout{** loaded for the language `#1'. Using the pattern for}%
\typeout{** the default language instead.}%
\else
\language=\csname l@#1\endcsname
\fi
#2}}
\providecommand{\BIBdecl}{\relax}
\BIBdecl

\bibitem{Shor95}
P.~W. Shor, ``Scheme for reducing decoherence in quantum computer memory,''
  \emph{Phys. Rev. A}, vol.~52, pp. 2493--2496, 1995.

\bibitem{CS96}
A.~R. Calderbank and P.~W. Shor, ``Good quantum error-correcting codes exist,''
  \emph{Phys. Rev. A}, vol.~54, p. 1098, 1996.

\bibitem{Steane96}
A.~M. Steane, ``Error correcting codes in quantum theory,'' \emph{Phys. Rev.
  Lett.}, vol.~77, p. 793, 1996.

\bibitem{GotPhD}
D.~Gottesman, ``Stabilizer codes and quantum error correction,'' Ph.D.
  dissertation, California Institute of Technology, 1997.

\bibitem{CRSS98}
A.~R. Calderbank, E.~M. Rains, P.~W. Shor, and N.~J.~A. Sloane, ``Quantum error
  correction via codes over {GF(4)},'' \emph{IEEE Trans. Inf. Theory}, vol.~44,
  pp. 1369--1387, 1998.

\bibitem{NC00}
M.~A. Nielsen and I.~L. Chuang, \emph{Quantum Computation and Quantum
  Information}.\hskip 1em plus 0.5em minus 0.4em\relax Cambridge University
  Press, 2000.

\bibitem{Gal63}
R.~G. Gallager, \emph{Low-Density Parity-Check Codes}, ser. no. 21 in Research
  Monograph Series.\hskip 1em plus 0.5em minus 0.4em\relax Cambridge, MA: MIT
  Press, 1963.

\bibitem{MN96}
D.~J.~C. MacKay and R.~M. Neal, ``Near {Shannon} limit performance of low
  density parity check codes,'' \emph{Electronics Letters}, vol.~32, no.~18,
  pp. 1645--1646, 1996.

\bibitem{Mac99}
D.~J.~C. MacKay, ``Good error-correcting codes based on very sparse matrices,''
  \emph{IEEE Trans. Inf. Theory}, vol.~45, pp. 399--431, 1999.

\bibitem{MMM04}
D.~J.~C. MacKay, G.~Mitchison, and P.~L. McFadden, ``Sparse-graph codes for
  quantum error correction,'' \emph{IEEE Trans. Inf. Theory}, vol.~50, pp.
  2315--2330, 2004.

\bibitem{PC08}
D.~Poulin and Y.~Chung, ``On the iterative decoding of sparse quantum codes,''
  \emph{Quant. Inf. Comput.}, vol.~8, pp. 987--1000, 2008.

\bibitem{Wan+12}
Y.-J. Wang, B.~C. Sanders, B.-M. Bai, and X.-M. Wang, ``Enhanced feedback
  iterative decoding of sparse quantum codes,'' \emph{IEEE Trans. Inf. Theory},
  vol.~58, pp. 1231--1241, 2012.

\bibitem{Bab+15}
Z.~Babar, P.~Botsinis, D.~Alanis, S.~X. Ng, and L.~Hanzo, ``Fifteen years of
  quantum {LDPC} coding and improved decoding strategies,'' \emph{IEEE Access},
  vol.~3, pp. 2492--2519, 2015.

\bibitem{ROJ19}
A.~Rigby, J.~C. Olivier, and P.~Jarvis, ``Modified belief propagation decoders
  for quantum low-density parity-check codes,'' \emph{Phys. Rev. A}, vol. 100,
  p. 012330, 2019.

\bibitem{PK19}
P.~Panteleev and G.~Kalachev, ``Degenerate quantum {LDPC} codes with good
  finite length performance,'' \emph{e-print arXiv:1904.02703}, 2019.

\bibitem{KL20}
K.-Y. Kuo and C.-Y. Lai, ``Refined belief propagation decoding of sparse-graph
  quantum codes,'' \emph{IEEE J. Sel. Areas Inf. Theory}, vol.~1, pp. 487--498,
  July 2020.

\bibitem{RWBC20}
J.~Roffe, D.~R. White, S.~Burton, and E.~T. Campbell, ``Decoding across the
  quantum {LDPC} code landscape,'' \emph{Phy. Rev. Res.}, vol.~2, p. 043423,
  2020.

\bibitem{LP19}
Y.-H. Liu and D.~Poulin, ``Neural belief-propagation decoders for quantum
  error-correcting codes,'' \emph{Phys. Rev. Lett.}, vol. 122, p. 200501, 2019.

\bibitem{Pea88}
J.~Pearl, \emph{Probabilistic reasoning in intelligent systems: networks of
  plausible inference}.\hskip 1em plus 0.5em minus 0.4em\relax Morgan Kaufmann,
  1988.

\bibitem{Tan81}
R.~Tanner, ``A recursive approach to low complexity codes,'' \emph{IEEE Trans.
  Inf. Theory}, vol.~27, pp. 533--547, 1981.

\bibitem{WLK95}
N.~Wiberg, H.-A. Loeliger, and R.~Kotter, ``Codes and iterative decoding on
  general graphs,'' \emph{European Transactions on Telecommunications}, vol.~6,
  no.~5, pp. 513--525, 1995.

\bibitem{MMC98}
R.~J. {McEliece}, D.~J.~C. {MacKay}, and {Jung-Fu Cheng}, ``Turbo decoding as
  an instance of {Pearl's} ``belief propagation'' algorithm,'' \emph{IEEE J.
  Sel. Areas Commun.}, vol.~16, pp. 140--152, 1998.

\bibitem{AM00}
S.~M. Aji and R.~J. McEliece, ``The generalized distributive law,'' \emph{IEEE
  Trans. Inf. Theory}, vol.~46, pp. 325--343, 2000.

\bibitem{KFL01}
F.~R. Kschischang, B.~J. Frey, and H.-A. Loeliger, ``Factor graphs and the
  sum-product algorithm,'' \emph{IEEE Trans. Inf. Theory}, vol.~47, pp.
  498--519, 2001.

\bibitem{RU08}
T.~Richardson and R.~Urbanke, \emph{Modern coding theory}.\hskip 1em plus 0.5em
  minus 0.4em\relax Cambridge University Press, 2008.

\bibitem{Kni96a}
E.~Knill, ``Non-binary unitary error bases and quantum codes,'' \emph{arXiv
  preprint quant-ph/9608048}, 1996.

\bibitem{Kni96b}
------, ``Group representations, error bases and quantum codes,'' \emph{e-print
  arXiv:quant-ph/9608049}, 1996.

\bibitem{Rai99}
E.~M. Rains, ``Nonbinary quantum codes,'' \emph{IEEE Trans. Inf. Theory},
  vol.~45, pp. 1827--1832, 1999.

\bibitem{MU00}
R.~Matsumoto and T.~Uyematsu, ``Constructing quantum error-correcting codes for
  $p^m$-state systems from classical error-correcting codes,'' \emph{IEICE
  Trans. Fundam. Electron., Commun., Comput. Sci.}, vol.~83, pp. 1878--1883,
  2000.

\bibitem{AK01}
A.~Ashikhmin and E.~Knill, ``Nonbinary quantum stabilizer codes,'' \emph{IEEE
  Trans. Inf. Theory}, vol.~47, pp. 3065--3072, 2001.

\bibitem{KKKS06}
A.~Ketkar, A.~Klappenecker, S.~Kumar, and P.~K. Sarvepalli, ``Nonbinary
  stabilizer codes over finite fields,'' \emph{IEEE Trans. Inf. Theory},
  vol.~52, pp. 4892--4914, 2006.

\bibitem{KF05}
R.~Klesse and S.~Frank, ``Quantum error correction in spatially correlated
  quantum noise,'' \emph{Phys. Rev. Lett.}, vol.~95, p. 230503, 2005.

\bibitem{NB06}
E.~Novais and H.~U. Baranger, ``Decoherence by correlated noise and quantum
  error correction,'' \emph{Phys. Rev. Lett.}, vol.~97, p. 040501, 2006.

\bibitem{DM98}
M.~C. Davey and D.~J.~C. MacKay, ``Low density parity check codes over
  {GF(q)},'' in \emph{Proc. IEEE Inf. Theory Workshop (ITW)}, 1998, pp. 70--71.

\bibitem{MD01}
D.~J.~C. MacKay and M.~C. Davey, ``Evaluation of {Gallager} codes for short
  block length and high rate applications,'' in \emph{Codes, Systems, and
  Graphical Models}.\hskip 1em plus 0.5em minus 0.4em\relax Springer, 2001, pp.
  113--130.

\bibitem{DF07}
D.~Declercq and M.~P.~C. Fossorier, ``Decoding algorithms for nonbinary {LDPC}
  codes over {GF(q)},'' \emph{IEEE Trans. Commun.}, vol.~55, p. 633, 2007.

\bibitem{DT14}
N.~Delfosse and J.-P. Tillich, ``A decoding algorithm for {CSS} codes using the
  {X/Z} correlations,'' in \emph{Proc. IEEE Int. Symp. Inf. Theory (ISIT)},
  2014, pp. 1071--1075.

\bibitem{WSM04}
H.~{Wymeersch}, H.~{Steendam}, and M.~{Moeneclaey}, ``Log-domain decoding of
  {LDPC} codes over {GF(q)},'' in \emph{Proc. IEEE Int. Conf. Commun. (ICC)},
  vol.~2, 2004, pp. 772--776.

\bibitem{RVH95}
P.~Robertson, E.~Villebrun, and P.~Hoeher, ``A comparison of optimal and
  sub-optimal {MAP} decoding algorithms operating in the log domain,'' in
  \emph{Proc. IEEE Int. Conf. Commun. (ICC)}, vol.~2, 1995, pp. 1009--1013.

\bibitem{HOP96}
J.~Hagenauer, E.~Offer, and L.~Papke, ``Iterative decoding of binary block and
  convolutional codes,'' \emph{IEEE Trans. Inf. Theory}, vol.~42, p. 429, 1996.

\bibitem{HEAD01}
X.-Y. Hu, E.~Eleftheriou, D.-M. Arnold, and A.~Dholakia, ``Efficient
  implementations of the sum-product algorithm for decoding {LDPC} codes,'' in
  \emph{Proc. IEEE Global Telcom. Conf.}, vol.~2, 2001, pp. 1036--1036E.

\bibitem{KHIS11}
K.~Kasai, M.~Hagiwara, H.~Imai, and K.~Sakaniwa, ``Quantum error correction
  beyond the bounded distance decoding limit,'' \emph{IEEE Trans. Inf. Theory},
  vol.~58, pp. 1223--1230, 2011.

\bibitem{KP13}
A.~A. Kovalev and L.~P. Pryadko, ``Quantum {Kronecker} sum-product low-density
  parity-check codes with finite rate,'' \emph{Phys. Rev. A}, vol.~88, p.
  012311, 2013.

\bibitem{TZ14}
J.-P. Tillich and G.~Z{\'e}mor, ``Quantum {LDPC} codes with positive rate and
  minimum distance proportional to the square root of the blocklength,''
  \emph{IEEE Trans. Inf. Theory}, vol.~60, pp. 1193--1202, 2014.

\bibitem{CF02b}
J.~Chen and M.~P.~C. Fossorier, ``Near optimum universal belief propagation
  based decoding of low-density parity check codes,'' \emph{IEEE Trans.
  Commun.}, vol.~50, pp. 406--414, 2002.

\bibitem{CDE+05}
J.~Chen, A.~Dholakia, E.~Eleftheriou, M.~P.~C. Fossorier, and X.-Y. Hu,
  ``Reduced-complexity decoding of {LDPC} codes,'' \emph{IEEE Trans. Commun.},
  vol.~53, pp. 1288--1299, 2005.

\bibitem{YHB04}
M.~R. Yazdani, S.~Hemati, and A.~H. Banihashemi, ``Improving belief propagation
  on graphs with cycles,'' \emph{IEEE Commun. Lett.}, vol.~8, pp. 57--59, 2004.

\bibitem{McE87}
R.~J. McEliece, \emph{Finite Fields for Computer Scientists and
  Engineers}.\hskip 1em plus 0.5em minus 0.4em\relax Kluwer Academic
  Publishers, 1987.

\bibitem{GRB03}
M.~Grassl, M.~R{\"o}tteler, and T.~Beth, ``Efficient quantum circuits for
  non-qubit quantum error-correcting codes,'' \emph{Int. J. of Found. Comput.
  Sci.}, vol.~14, pp. 757--775, 2003.

\bibitem{RV20}
N.~Raveendran and B.~Vasi{\'c}, ``Trapping sets of quantum {LDPC} codes,''
  \emph{e-print arXiv:2012.15297}, 2020.

\bibitem{KCL21}
K.-Y. Kuo, I.-C. Chern, and C.-Y. Lai, ``Decoding of quantum data-syndrome
  codes via belief propagation,'' in \emph{Proc. IEEE Int. Symp. Inf. Theory
  (ISIT)}, 2021, pp. 1552--1557.

\bibitem{IEEE754}
\emph{{IEEE standard for binary floating-point arithmetic}}.\hskip 1em plus
  0.5em minus 0.4em\relax New York: IEEE, 1985, note: Standard 754--1985.

\bibitem{HFI12}
M.~Hagiwara, M.~P.~C. Fossorier, and H.~Imai, ``Fixed initialization decoding
  of {LDPC} codes over a binary symmetric channel,'' \emph{IEEE Trans. Inf.
  Theory}, vol.~58, no.~4, pp. 2321--2329, 2012.

\bibitem{KL21}
K.-Y. Kuo and C.-Y. Lai, ``Exploiting degeneracy in belief propagation decoding
  of quantum codes,'' \emph{e-print arXiv:2104.13659}, 2021.

\bibitem{KL20b}
------, ``Refined belief-propagation decoding of quantum codes with scalar
  messages,'' in \emph{IEEE Globecom Workshops (GC Wkshps)}, 2020, pp. 1--6.

\bibitem{Kit03}
A.~Y. Kitaev, ``Fault-tolerant quantum computation by anyons,'' \emph{Ann.
  Phys.}, vol. 303, pp. 2--30, 2003.

\bibitem{BM06}
H.~Bombin and M.~A. Martin-Delgado, ``Topological quantum distillation,''
  \emph{Phys. Rev. Lett.}, vol.~97, p. 180501, 2006.

\bibitem{GGB99}
M.~Grassl, W.~Geiselmann, and T.~Beth, ``Quantum {Reed--Solomon} codes,'' in
  \emph{Proc. Int. Symp. Appl. Algebra, Algebraic Algorithms, Error-Correcting
  Codes}.\hskip 1em plus 0.5em minus 0.4em\relax Honolulu, USA: Springer, 1999,
  pp. 231--244.

\bibitem{GB99}
M.~Grassl and T.~Beth, ``Quantum {BCH} codes,'' in \emph{Proc. X. Int. Symp.
  Theoretic. Elec. Eng.}, Magdeburg, Germany, 1999, pp. 207--212.

\bibitem{AKS07}
S.~A. Aly, A.~Klappenecker, and P.~K. Sarvepalli, ``On quantum and classical
  {BCH} codes,'' \emph{IEEE Trans. Inf. Theory}, vol.~53, pp. 1183--1188, 2007.

\bibitem{LaGua14}
G.~G. La~Guardia, ``On the construction of nonbinary quantum {BCH} codes,''
  \emph{IEEE Trans. Inf. Theory}, vol.~60, pp. 1528--1535, 2014.

\bibitem{LALL17}
S.~Lin, K.~Abdel-Ghaffar, J.~Li, and K.~Liu, ``Iterative soft-decision decoding
  of {Reed--Solomon} codes of prime lengths,'' in \emph{Proc. IEEE Int. Symp.
  Inf. Theory (ISIT)}, 2017, pp. 341--345.

\bibitem{LALL20}
------, ``A scheme for collective encoding and iterative soft-decision decoding
  of cyclic codes of prime lengths: Applications to {Reed--Solomon}, {BCH}, and
  quadratic residue codes,'' \emph{IEEE Trans. Inf. Theory}, vol.~66, pp.
  5358--5378, 2020.

\bibitem{JN06}
J.~Jiang and K.~R. Narayanan, ``Iterative soft-input soft-output decoding of
  {Reed--Solomon} codes by adapting the parity-check matrix,'' \emph{IEEE
  Trans. Inf. Theory}, vol.~52, pp. 3746--3756, 2006.

\bibitem{EM06}
M.~El-Khamy and R.~J. McEliece, ``Iterative algebraic soft-decision list
  decoding of {Reed--Solomon} codes,'' \emph{IEEE J. Sel. Areas Commun.},
  vol.~24, pp. 481--490, 2006.

\end{thebibliography}


\EOD

\end{document}